\title{Graded colour Lie superalgebras for solving L\'evy-Leblond equations}
\author[M.~Ryan]{Mitchell Ryan}
\DeclareDocumentCommand\cbrak{ l m m }{\braces#1{\llbracket}{\rrbracket}{#2,#3}} % colour bracket
\newcommand{\adjt}[1]{#1^{\dagger}}				%adjoint
\newcommand{\chiralg}{\widetilde{\gamma}^{\mathrm{chir}}}	%chiral gamma
\newcommand{\conj}[1]{#1^*}					%complex conjugate
\newcommand{\CC}{\mathbb{C}}					%complex numbers
\newcommand{\Cl}[2]{C\ell_{#1,#2}}				%Clifford algebra
\newcommand{\g}{\mathfrak{g}}					%Lie algebra g
\newcommand{\Hll}{H_{\textup{LL}}}				%L\'evy-Leblond Hamiltonian
\newcommand{\Hsch}{H_{\textup{SL}}}				%Schr\"odinger Hamiltonian
\newcommand{\Id}{I}						%identity operator
\renewcommand{\laplacian}{\Delta}				%Laplacian
\newcommand{\Lp}[1]{L^{#1}}					%Lp space
\newcommand{\parity}{\mathscr{P}}				%parity/reflection operator
\newcommand{\osp}{\mathfrak{osp}}				%orthosymplectic superalgebra
\newcommand{\ocmpl}[1]{#1^{\perp}}				%orthogonal complement
\newcommand{\RR}{\mathbb{R}}					%real numbers
\DeclareMathOperator{\spn}{span}				%linear span
\newcommand{\vac}{\ket{0}}					%vacuum state
\newcommand{\vacsch}{\ket{0_{\textup{SL}}}}			%vacuum state for \Hsch
\newcommand{\vacschp}{\ket{0'_{\textup{SL}}}}			%vacuum element for \Hsch
\let\vec\mathbf							%vectors
\newcommand{\ZZ}{\mathbb{Z}}					%integers
\newcommand{\Ztwo}[1][]{\ZZ_2^{#1}}				%integers modulo two
\newcommand{\Ztzt}{\Ztwo\times\Ztwo}				%Klein four group
\newcommand{\dfn}[1]{{\color{red!70!black}\itshape #1}}
\newtheorem{thm}{Theorem}[section]
\newtheorem{prop}[thm]{Proposition}
\theoremstyle{definition}
\newtheorem{ex}[thm]{Example}
\theoremstyle{remark}
\newtheorem{rmk}[thm]{Remark}
\begin{document}
\address{School of Mathematics and Physics, University of Queensland, St.\ Lucia, QLD 4072, Australia\\
ORCID: 0009-0006-2038-4410}
\email{\href{mailto:mitchell.ryan@uq.edu.au}{mitchell.ryan@uq.edu.au}}

\keywords{Color Lie (super)algebras, Graded Lie (super)algebras, L\'evy-Leblond equation}
\subjclass[2020]{17B75, 17B70, 81Q05}
% 17B75 - Color Lie (super)algebras
% 17B70 - Graded Lie (super)algebras
% 81Q05 - Closed and approximate solutions to the Schrödinger, Dirac, Klein-Gordon and other equations of quantum mechanics 

\begin{abstract}
	The L\'evy-Leblond equation with free potential admits a symmetry algebra that is a \( \mathbb{Z}_2\times\mathbb{Z}_2 \)-graded colour Lie superalgebra
	(see Aizawa--Kuznetsova--Tanaka--Toppan, 2016).
	We extend this result in two directions
	by considering a time-independent version of the L\'evy-Leblond equation.
	First,
	we construct a \( \mathbb{Z}_2^3 \)-graded colour Lie superalgebra containing operators that leave the eigenspaces invariant
	and demonstrate the utility of this algebra in constructing general solutions for the free equation.
	Second,
	we find that the ladder operators for the harmonic oscillator generate a \( \mathbb{Z}_2\times\mathbb{Z}_2 \)-graded colour Lie superalgebra
	and we use the operators from this algebra to compute the spectrum.
	These results illustrate two points: the L\'evy-Leblond equation admits colour Lie superalgebras with gradings higher than \( \mathbb{Z}_2\times\mathbb{Z}_2 \) and colour Lie superalgebras appear for potentials besides the free potential.
\end{abstract}
\maketitle

\section{Introduction}
The L\'evy-Leblond equation is the non-relativistic limit of the Dirac equation~\cite{LevyLeblond1967}.
Dirac introduced his equation 
to describe a relativistic spin-\( 1/2 \) particle
and was the first quantum-mechanical equation to account for both spin and special relativity~\cite{Dirac1928}.
To obtain his equation, Dirac took a `square root' of the Klein--Gordon equation (which describes a relativistic spin-\( 0 \) particle) in order to replace the second-order time derivative with a first-order derivative~\cite{Dirac1928}.
This square root required the introduction of gamma matrices (which generate Clifford algebras%
%and must satisfy certain anticommutation relations
) and naturally led to the introduction of spin into the equation.

In a similar way,
%that Dirac obtained his equation by taking a `square root' of the Klein-Gordon equation,
L\'evy-Leblond~\cite{LevyLeblond1967} obtained his equation as a `square root' of the non-relativistic Schr\"odinger equation.
Similar to the Dirac equation, this square root required the introduction of gamma matrices and spin.
L\'evy-Leblond showed that %, just as the Dirac equation is invariant under the Poincar\'e group,
his equation is invariant under the Galilei group,
%He also showed that his equation 
predicts the correct value for the magnetic moment of a spin-\( 1/2 \) particle
and can be obtained as the non-relativistic limit of the Dirac equation~\cite{LevyLeblond1967}.

Remarkably, it has since been shown that the L\'evy-Leblond equation has symmetry algebras which are \( \Ztzt \)-graded colour Lie superalgebras~\cite{AKTT2016,AKTT2017} .
The authors of~\cite{AKTT2016,AKTT2017} were motivated by the fact that
ungraded Schr\"odinger operators can give rise to \( \Ztwo \)-graded Lie superalgebras (see~\cite{Toppan2015});
for example,
the ladder operators for the quantum harmonic oscillator generate a \( \Ztwo \)-graded Lie superalgebra \( \osp(1|2) \).
With this motivation~\cite{AKTT2016}, 
it was investigated whether replacing the ungraded Schr\"odinger operator
with the \( \Ztwo \)-graded L\'evy-Leblond operator
would yield symmetry with a second \( \Ztwo \)-gradation, making the symmetry algebras \( \Ztzt \)-graded.
Such \( \Ztzt \)-graded symmetry algebras did appear for the free equation~\cite{AKTT2016,AKTT2017}
in the form of a colour Lie superalgebra.
However, this \( \Ztzt \)-graded symmetry disappeared when considering the harmonic potential~\cite{AKTT2016}.

Colour Lie (super)algebras were first introduced in~\cite{RW1978a,RW1978b} (however, see also~\cite{Ree1960})
and generalise Lie superalgebras to grading by any abelian group.
Since their application to the L\'evy-Leblond equation~\cite{AKTT2016,AKTT2017}, there has been an increase in research activity surrounding applications of colour Lie (super)algebras.
For instance, colour Lie (super)algebras have been applied to
generalised quantum mechanics~\cite{BD2020a,AKT2020,AKT2021,AAD2020a,AAD2020b,DA2021,AAD2021,Bruce2021,Quesne2021,AIKT2023} 
and parastatistics~\cite{YJ2001,JYL2001,KHA2011a,KHA2011b,Tolstoy2014b,SVdJ2018,Toppan2021a,Toppan2021b,Zhang2023,SVdJ2024}.
There also exists a largely independent line of research focussing on graded (non-colour) Lie algebras%
---see~\cite{BP2009,Ryan2024} for applications of results for such graded (non-colour) Lie algebras to the study of colour Lie (super)algebras.

Despite the recent activity surrounding colour Lie (super)algebras,
the \( \Ztzt \)-graded nature of the L\'evy-Leblond equation has received little attention since the first two papers~\cite{AKTT2016,AKTT2017} (see \cite{BCDM2020,Faustino2023} for recent work).
This paper addresses two questions raised by the work in~\cite{AKTT2016,AKTT2017}:
\begin{enumerate}
	\item\label{question:differentgradings} Do colour Lie superalgebras with gradings different from \( \Ztzt \) appear?
	\item\label{question:nonfreepotential} Do colour Lie superalgebras appear for potentials besides the free potential?
	%(despite the \( \Ztzt \)-graded symmetry of~\cite{AKTT2016} disappearing for the harmonic potential)
\end{enumerate}
We demonstrate an affirmative answer to both questions.
However, unlike~\cite{AKTT2016,AKTT2017},
we will not be focussed on symmetry algebras.
Instead, we search for algebras that aid in solving the time-independent equation.

The time independent equation is of the form \( \Hll\ket{\psi} = \gamma_+E\ket{\psi} \).
Here, \( \gamma_+ \) is a gamma matrix, \( E\in\CC \), \( \ket{\psi} \) is a state in the Hilbert space and \( \Hll \) is a particular matrix differential operator which we treat as if it were a Hamiltonian.
Of particular note is the fact that \( \gamma_+ \) is not invertible, and we are thus forced to leave \( \gamma_+ \) on the right-hand side of the equation.
Regardless, we proceed with the analysis as normal, and call \( E \) a `\( \gamma_+ \)-eigenvalue'.

For question~\ref{question:differentgradings}, we examine a time-independent version of the \( 1+1 \)-dimensional L\'evy-Leblond equation with free potential.
After solving a differential equation, we find five linearly independent operators which leave the \( \gamma_+ \)-eigenspaces invariant: the identity, a gamma matrix multiplied by the parity operator, the Schr\"odinger Hamiltonian, and two different `square roots' of the Schr\"odinger Hamiltonian.
These five operators close to form a \( \Ztwo[3] \)-graded colour Lie superalgebra \( \mathfrak{D} \).
We show that this \( \Ztwo[3] \)-graded colour Lie superalgebra is fundamental to the L\'evy-Leblond equation:
solutions to the L\'evy-Leblond equation can be expressed in terms of simultaneous \( \gamma_+ \)-eigenstates of the two `square roots' of the Schr\"odinger Hamiltonian.
As an illustration, we use this fact to solve the L\'evy-Leblond equation in this simple case.

To answer question~\ref{question:nonfreepotential},
%note that the colour algebra symmetry in~\cite{AKTT2016} disappeared when a harmonic potential was introduced.
%With this in mind,
we take a slightly different approach from~\cite{AKTT2016} and introduce the harmonic potential in a different form.
We then set up a \( \gamma_+ \)-eigenvalue equation and search for ladder operators.
We find that the ladder operators with the L\'evy-Leblond Hamiltonian generate a \( \Ztzt \)-graded colour Lie superalgebra.
This colour Lie superalgebra is the direct analogue of the \( \osp(1|2) \) superalgebra for the Schr\"odinger harmonic oscillator but, in the case of the L\'evy-Leblond equation, an extra \( \Ztwo \) gradation is required.
We then use the \( \Ztzt \)-graded colour Lie superalgebra to compute the spectrum.

This paper is organised as follows:
in~\Cref{sec:LLequation} we introduce the L\'evy-Leblond equation 
%(with potential introduced in a different form from~\cite{AKTT2016})
and set up a time-independent version of the equation.
In~\Cref{sec:free},
we construct the \( \Ztwo[3] \)-graded colour Lie superalgebra of \( \gamma_+ \)-eigenspace-preserving operators of the \( (1+1) \)-dimensional free equation.
%We then use this colour Lie superalgebra to solve the free L\'evy-Leblond equation.
In~\Cref{sec:harmonic}, we construct the \( \Ztzt \)-graded colour Lie superalgebra
generated by the ladder operators of the \( (1+1) \)-dimensional L\'evy-Leblond harmonic oscillator.
%We use these operators to derive the spectrum of the L\'evy-Leblond quantum oscillator.
In \Cref{sec:NonrelativisticLimit}, we show how the gamma matrix relations used in this paper can be obtained when considering the L\'evy-Leblond equation as the non-relativistic limit of the Dirac equation.

\section{The L\'evy-Leblond equation and colour Lie superalgebras}\label{sec:LLequation}
\subsection{Free equation}
For the form of the free L\'evy-Leblond equation, we will follow~\cite{AKTT2016}.
A  L\'evy-Leblond equation is a first-order differential equation that is a square root of the heat or Schr\"odinger equation in \( (1+d) \)-dimensions \cite{AKTT2016}. 
The goal of the L\'evy-Leblond equation is to reduce a second-order differential equation to a first-order equation.

Let \( t \) be the time coordinate, let \( x_j \) be the \( j \)-th spacial coordinate and take \( \lambda\in\CC \).
Throughout this paper, we will use the notation \( \vec{x} = (x_1,\ldots,x_d) \) and \( \partial_t = \pdv{t}\), \( \partial_j = \pdv{x_j} \).% for the derivatives with respect to space-time coordinates in \( (1+d) \)-dimensions.
The \dfn{L\'evy-Leblond operator} (with free potential) is
\begin{equation}\label{eq:generalLevyLeblond}
	\Omega = \gamma_+ \partial_t +  \gamma_-\lambda + \gamma^j \partial_j
\end{equation}
where Einstein summation convention is used,
and the coefficients \( \gamma_+,\gamma_-, \gamma^j \:\) are gamma matrices satisfying
\begin{equation}\label{eq:LevyLeblondAR}
	\begin{aligned}
		\acomm{\gamma_{\pm}}{\gamma_{\pm}} &= 0, & \acomm{\gamma_+}{\gamma_-} &= \Id, \\
		\acomm{\gamma_{\pm}}{\gamma^j} &= 0, & \acomm{\gamma^j}{\gamma^k} &= 2\delta_{jk},
	\end{aligned}
\end{equation}
(with \( \delta_{jk} \) the Kronecker delta).

Using \eqref{eq:LevyLeblondAR}, the square of the L\'evy-Leblond operator is
\[
	\Omega^2 = \lambda\partial_t + \laplacian
\]
where \( \laplacian = \sum_j \partial_j^2 \) is the Laplacian.
If \( \lambda \) is a negative real number,
then the partial differential equation induced by \( \Omega^2 \)
(i.e.\ \( \Omega^2 \Psi(t,\vec{x}) = 0 \))
becomes the heat equation,
and if \( \lambda = i\beta \) for \( \beta\in\RR_{>0} \) (in particular, \( \beta = 2m/\hbar \) where \( m \) is mass and \( \hbar \) is the reduced Planck constant) then the partial differential equation induced by \( \Omega^2 \) becomes the Schr\"odinger equation.

The gamma matrices in the L\'evy-Leblond equation belong to some Clifford algebra.
The most general Clifford algebra required is \( \Cl{2}{d}(\RR)\otimes\CC \).
However, as in~\cite[Section~4]{AKTT2016}, we require that the gamma matrices for the \( (1+1) \)-dimensional equation also satisfy the following relations
\begin{equation}\label{eq:gammaMatricesFreeEquation}
	(\gamma_{\pm})^2 = 0, \qquad (\gamma^1)^2 = \Id, \qquad \gamma_{\pm}\gamma_{\mp} = \frac{1}{2}(\Id \pm \gamma^1), \qquad \gamma^1 \gamma_{\pm} = \pm \gamma_{\pm} = - \gamma_{\pm}\gamma^1.
\end{equation}
In this case, the gamma matrices can be chosen as elements of the Clifford algebra \( \Cl{1}{1}(\RR)\otimes\CC \).
More generally, we argue in \Cref{sec:NonrelativisticLimit} that \( \Cl{1}{d}(\RR)\otimes\CC \) is the most suitable Clifford algebra when \( d \) is odd.
%Note that Clifford algebras can be given the structure of a colour Lie (super)algebra~\cite{Aizawa2018},
%Since \( \Cl{1}{1}(\RR)\otimes\CC \) can be given a \( \Ztzt \)-graded colour algebra structure (see \Cref{sec:GradingClifford}),
%which may be a potential cause for the appearance of \( \Ztzt \)-graded symmetry algebras in the \( (1+1) \)-dimensional L\'evy-Leblond equation.

\begin{rmk}
	We could work entirely with real matrices, and realise the imaginary unit as a real matrix \( J \) which commutes with all the gamma matrices and squares to the negative identity matrix.
	This would allow us to realise the gamma matrices as a representation of a \emph{real} Clifford algebra, and is the approach taken in~\cite{AKTT2016}.
	However, for simplicity, we will allow our gamma matrices to be over the complex numbers.
\end{rmk}

To actually solve the L\'evy-Leblond equation, we would need to choose a specific matrix representation of the corresponding Clifford algebra (though many of the properties that we will study are representation independent).
This makes the L\'evy-Leblond equation a matrix differential equation.
In particular, if we realise relations~\eqref{eq:LevyLeblondAR} using \( n \times n \)-dimensional matrices,
the corresponding Hilbert space of the quantum system is \( \Lp{2}(\RR)\otimes \CC^n \),
whose elements we interpret as \( n \)-component vectors of functions.
This interpretation provides a natural action of the L\'evy-Leblond operator on the Hilbert space.
For example, in \( (1+1) \)-dimensions, such a representation (as used in~\cite{AKTT2016}) is
\begin{equation}\label{eq:gammamatrixrep}
	\gamma^1 = 
	\begin{pmatrix}
		1 & 0\\
		0 & -1
	\end{pmatrix},
	\qquad 
	\gamma_+ =
	\begin{pmatrix}
		0 & 1\\
		0 & 0
	\end{pmatrix},
	\qquad
	\gamma_- =
	\begin{pmatrix}
		0 & 0\\
		1 & 0
	\end{pmatrix}
\end{equation}
which acts on the Hilbert space
\[
	\Lp{2}(\RR)\otimes \CC^2
	= \left\{
		\begin{pmatrix}
			f\\
			g
		\end{pmatrix}
		\mid
		f,g\in\Lp{2}(\RR)
	\right\}
\]
(note that this representation satisfies~\eqref{eq:gammaMatricesFreeEquation}).
If \( \Psi(t,\vec{x}) \) is a solution to the free L\'evy-Leblond equation (i.e.\ \( \Omega \Psi(t,\vec{x}) = 0 \)),
then we also have \( \Omega^2 \Psi(t,\vec{x}) = 0 \), so each component of \( \Psi(t,\vec{x}) \) will be a solution to the Schr\"odinger equation.

\subsection{L\'evy-Leblond operator with potential}\label{sec:potential}
Up to this point, we have only been considering the free equation.
There are (at least) two ways to add a potential function to the L\'evy-Leblond operator.

From a supersymmetry perspective, a natural choice for the L\'evy-Leblond operator with potential is the supercharge
\begin{equation}\label{eq:LevyLeblondSupercharge}
	\Omega = \gamma_+\partial_t + \gamma_- \lambda + \gamma^j\partial_j + \gamma_f f(\vec{x})
\end{equation}
for some additional gamma matrix \( \gamma_f \)
that satisfies additional anticommutation relations (omitted; see~\cite{AKTT2016}). 
These anticommutation relations for \( \gamma_f \) ensure that the square of the L\'evy-Leblond operator is
\[
	\Omega^2 = \lambda\partial_t + \laplacian + (f(\vec{x}))^2 + \gamma^j\gamma_f\pdv{f}{x_j}.
\]
In this case, the potential function is the square function \( (f(\vec{x}))^2 \).
This is the choice of L\'evy-Leblond operator made in~\cite{AKTT2016}.

From a gauge transformation perspective,
the L\'evy-Leblond operator can be obtained by replacing \( i\hbar\partial_t \) with \( i\hbar \partial_t-V(\vec{x}) \)
(where \( V(\vec{x}) \) is the potential function).
Working in natural units \( \hbar = 1 \) and multiplying through by \( i \),
the L\'evy-Leblond operator then becomes
\begin{equation}\label{eq:LevyLeblondPotential}
	\Omega = \gamma_+(i\partial_t - V(\vec{x})) - \gamma_- \beta + \gamma^ji\partial_j
\end{equation}
with square
\[
	\Omega^2 = -i\beta\partial_t - \laplacian + \beta V(\vec{x}) - \gamma^j\gamma_+ i \pdv{V}{x_j}.
\]
This was the choice of operator originally made by L\'evy-Leblond~\cite{LevyLeblond1967} 
(in combination with a magnetic vector potential which we have omitted).

When  compared to the Schr\"odinger operator, the squares of both~\eqref{eq:LevyLeblondSupercharge} and~\eqref{eq:LevyLeblondPotential}
contain an additional term which depends on the derivative of a potential.
Thus, unlike the free equation, the components of a solution to the L\'evy-Leblond equation with non-vanishing potential are not necessarily a solution to the Schr\"odinger equation.
%This demonstrates an essential difference between the L\'evy-Leblond equation and the Schr\"odinger equation.
%Such a term also appears when squaring the Dirac equation.

In this paper, we will use the operator~\eqref{eq:LevyLeblondPotential} with harmonic potential,
since the colour Lie superalgebra symmetries of~\eqref{eq:LevyLeblondSupercharge} with harmonic potential have already been studied in \cite{AKTT2016}.
The operator~\eqref{eq:LevyLeblondPotential} has the advantages that it does not require the introduction of an additional gamma matrix and that we can work directly with the potential function instead of a `square-root' function \( f \) with \( (f(\vec{x}))^2 = V(\vec{x}) \).
One disadvantage of~\eqref{eq:LevyLeblondPotential} is that, with a quadratic potential \( V(x_1)\propto {x_1}^2 \), the square of~\eqref{eq:LevyLeblondPotential} contains an extra \( \gamma^1\gamma_+x_1 \) term, whereas the square of~\eqref{eq:LevyLeblondSupercharge} only has a extra \( \gamma^1\gamma_f \) term that is independent of \( x_1 \).

\subsection{Time-independent L\'evy-Leblond equation}
The differential equation induced by \eqref{eq:LevyLeblondPotential} is 
\begin{equation}\label{eq:TimeDependentLevyLeblond}
	(\gamma_- \beta - \gamma^ji\partial_j + \gamma_+V(\vec{x}))\Psi(t,\vec{x}) = \gamma_+i\partial_t \Psi(t,\vec{x}).
\end{equation}
We can choose to interpret  \( \gamma_-\beta - \gamma^j i \partial_j + \gamma_+V(\vec{x}) \eqqcolon \Hll \) as if it were a Hamiltonian and set up a time-independent version of this equation:
\begin{equation}\label{eq:TimeIndependentLevyLeblond}
	(\gamma_- \beta - \gamma^j i \partial_j + \gamma_+ V(\vec{x})) \psi(\vec{x}) = \gamma_+ E \psi(\vec{x})
\end{equation}
for some \( E\in\CC \).
A difficulty that we encounter with the L\'evy-Leblond equation (which we do not encounter while solving the Schr\"odinger equation) is that \( (\gamma_+)^2 = 0 \) and so \( \gamma_+ \) is not invertible.
This means that we are forced to leave \( \gamma_+ \) on the right-hand side of~\eqref{eq:TimeIndependentLevyLeblond}.

Note that \( \Hll \) is not self-adjoint and its eigenvalues do not correspond to the energy of the system.
Regardless, we will call \( \Hll \) the (L\'evy-Leblond) Hamiltonian due to the role it plays in our computations.
This is a non-standard approach; we normally require physical observables (such as energy) to be given by the spectra of self-adjoint operators.
%(however, see e.g.~\cite{LevyLeblond1976}).
Despite being non-standard,
our approach is consistent with the standard formulation of quantum mechanics: we will show (\Cref{prop:samespectrum}) that the energy levels \( E \) of the time-independent equation~\eqref{eq:TimeIndependentLevyLeblond} are contained in the spectrum of the familiar self-adjoint Schr\"odinger Hamiltonian.

Let \( \psi(\vec{x}) \) be a solution to the time-independent equation~\eqref{eq:TimeIndependentLevyLeblond}.
We will call \( E \) a \dfn{\( \gamma_+ \)-eigenvalue} of \( \Hll \) with \( \gamma_+ \)-eigenstate \( \psi(\vec{x}) \) and use associated terminology (such as \( \gamma_+ \)-eigenspaces).
If \( \psi(\vec{x}) \) is a solution to the time-independent equation~\eqref{eq:TimeIndependentLevyLeblond}, then
\[
	\Psi(t,\vec{x}) = e^{-iEt} \psi(\vec{x})
\]
is a solution to the time-dependent equation~\eqref{eq:TimeDependentLevyLeblond}.

\begin{rmk}
	Instead of~\eqref{eq:TimeIndependentLevyLeblond}, we could consider the more general time-independent equation
	\begin{equation}
		\Hll \psi(\vec{x}) = \gamma_+ (E + \gamma_+E_+ + \gamma_-E_- + \gamma^j E_j) \psi(\vec{x}).
	\end{equation}
	If we choose a matrix representation for the gamma matrices,
	the corresponding time-dependent solution is
	\[
		\Psi(t,\vec{x}) = \exp(-i(E + \gamma_+E_+ + \gamma_-E_- + \gamma^jE_j)t) \psi(\vec{x})
	\]
	where this \( \exp \) is a matrix exponential.
	For example, if we choose the representation~\eqref{eq:gammamatrixrep} then \( \exp(-i\gamma_-E_- t) = \Id - i \gamma_-E_-t \).
	However, the time-independent equation \eqref{eq:TimeIndependentLevyLeblond} is sufficient for our purposes.
\end{rmk}

If we apply \( \Hll \) to \eqref{eq:TimeIndependentLevyLeblond}, then the differential equation becomes
\begin{equation}\label{eq:TimeIndependentSchrodingerlike}
	\left(- \laplacian + \beta V(\vec{x}) - \gamma^j\gamma_+i\pdv{V}{x_j}\right)\psi(\vec{x}) = \beta E \psi(\vec{x})
\end{equation}
and so \( E \) is an eigenvalue of an operator that is similar to the Schr\"odinger Hamiltonian:
\begin{equation}
	\Hsch = -(1/\beta)\laplacian + V(\vec{x}) - \gamma^j\gamma_+ i \pdv{V}{x_j}(\vec{x})/\beta.
\end{equation}
That is, any solution to the time-independent L\'evy-Leblond equation~\eqref{eq:TimeIndependentLevyLeblond}
will be a solution to the time-independent Schr\"odinger-like equation of~\eqref{eq:TimeIndependentSchrodingerlike}.

In fact, the spectrum of \( \Hsch \) is contained in the spectrum of the familiar Schr\"odinger Hamiltonian \( -(1/\beta)\laplacian + V(x) \).
We illustrate this fact in the following example (and prove it more generally in \Cref{prop:samespectrum}).
\begin{ex}\label{ex:samespectrum}
	Consider the representation~\eqref{eq:gammamatrixrep} for the \( (1+1) \)-dimensional equation.
	In this case, the Schr\"odinger-like equation~\eqref{eq:TimeIndependentSchrodingerlike} becomes
	\[
		\Hsch 
		\begin{pmatrix}
			\psi_1(\vec{x})\\
			\psi_2(\vec{x})
		\end{pmatrix}
		= 
		\begin{pmatrix}
			\left(-\frac{1}{\beta}(\partial_1)^2 + V(\vec{x})\right)\psi_1(\vec{x})- \frac{i}{\beta}V'(\vec{x}) \psi_2(\vec{x})\\
			\left(-\frac{1}{\beta}(\partial_1)^2 + V(\vec{x})\right)\psi_2(\vec{x})
		\end{pmatrix}
		=
		\begin{pmatrix}
			E\psi_1(\vec{x})\\
			E\psi_2(\vec{x})
		\end{pmatrix}.
	\]
	But the second component is just the familiar time-independent Schr\"odinger equation.
	Therefore, \( E \) is either an eigenvalue of the Schr\"odinger Hamiltonian or \( \psi_2 = 0 \).
	But if \( \psi_2 = 0 \) then the first component becomes the familiar Schr\"odinger equation---in either case, \( E \) is an eigenvalue of the Schr\"odinger Hamiltonian.
\end{ex}
%The argument in the above example can be stated more formally and generalised to other representations of the gamma matrices.

\subsection{Spectrum formalism}
In this paper, we will not concern ourselves with the full formality of unbounded operators acting on a Hilbert space.
However, given that we have introduced a new concept of \( \gamma_+ \)-eigenvalue, we will briefly explore this concept more formally in this section.

Since \( \Hll \) is an unbounded operator,
we should define the set of all \( \gamma_+ \)-eigenvalues \( E \) similar to a spectrum:
the \dfn{\( \gamma_+ \)-spectrum} of \( \Hll \) is the set of all \( E\in\CC \) such that \( \Hll - \gamma_+E \) has no everywhere-defined bounded inverse.
In general, a \( \gamma_+ \)-eigenvector corresponding to a member of the \( \gamma_+ \)-spectrum may lie outside the Hilbert space.

Observe that \( \Hll - \gamma_+E \) has bounded inverse if and only if \( (\Hll - \gamma_+ E)^2 = \Hll^2 - \beta E = \beta(\Hsch - E) \) has bounded inverse (provided \( \Hll \) is a closed operator), so the \( \gamma_+ \)-spectrum of \( \Hll \) is exactly the spectrum of \( \Hsch \).
In combination with the following proposition (c.f.\ \Cref{ex:samespectrum}), we can conclude that the \( \gamma_+ \)-spectrum of \( \Hll \) is contained in the spectrum of the corresponding Schr\"odinger Hamiltonian.

\begin{prop}\label{prop:samespectrum}
	Assume that \( H = -(1/\beta)\laplacian + V(x) \)
	is self-adjoint and that
	\(  \Hsch = H - \gamma^j\gamma_+ (i/\beta) \pdv{V}{x_j}(\vec{x}) \)
	is a closed operator with 
	the same domain as \( H \).
	Then, 
	the spectrum of \( \Hsch \) is contained in the spectrum of 
	\( H \).
\end{prop}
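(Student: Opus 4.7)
The plan is to show that every $E \notin \sigma(H)$ also lies in the resolvent set of $\Hsch$, which gives $\sigma(\Hsch) \subseteq \sigma(H)$. Writing $\Hsch = H + N$ with $N = -(i/\beta)\gamma^j\gamma_+\,\partial_j V$ (Einstein summation understood), the argument will rest on two structural properties of $N$. First, $N^2 = 0$, because the anticommutation relations force $\gamma^j\gamma_+\gamma^k\gamma_+ = -\gamma^j\gamma^k\gamma_+^2 = 0$. Second, since $H$ is scalar in the matrix indices, $(H-E)^{-1}$ commutes with each matrix factor $\gamma^j\gamma_+$ (though not with the multiplication operator $\partial_j V$).

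Fixing $E \in \rho(H)$, I would introduce $T \coloneqq N(H-E)^{-1}$. Because $\Hsch$ is closed with the same domain as $H$, the difference $N = \Hsch - H$ is $H$-bounded by the closed graph theorem, so $T$ extends to a bounded operator on the Hilbert space. Combining the two observations above yields $T^2 = 0$: in the product $N(H-E)^{-1}N(H-E)^{-1}$ one slides the inner $(H-E)^{-1}$ past the adjacent $\gamma^k\gamma_+$, exposing the vanishing factor $\gamma^j\gamma_+\gamma^k\gamma_+$. Hence $I + T$ is invertible with bounded inverse $I - T$.

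From the factorisation $\Hsch - E = (I+T)(H-E)$, valid on the common domain, this yields
\[
    (\Hsch - E)^{-1} = (H-E)^{-1}(I - T),
\]
an everywhere-defined bounded operator; routine verification confirms that this is both a left and a right inverse on the appropriate domains, so $E \in \rho(\Hsch)$.

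The only real obstacle I anticipate is bookkeeping with domains when manipulating the unbounded operators, in particular keeping clear that $(H-E)^{-1}$ commutes with the gamma-matrix parts of $N$ but \emph{not} with its scalar coefficients $\partial_j V$. Once the matrix and multiplication parts are cleanly separated, the nilpotency $T^2 = 0$ and hence the whole containment follows directly from $\gamma_+^2 = 0$ together with $\{\gamma^j,\gamma_+\} = 0$.
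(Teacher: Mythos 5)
Your proposal is correct, and it reaches the conclusion by a genuinely different route from the paper. The paper works geometrically with the orthogonal decomposition \( K\oplus\ocmpl{K} \), \( K=\ker\gamma_+ \): since \( H \) is scalar it preserves \( K \), self-adjointness of \( H \) gives \( H\ocmpl{K}\subseteq\ocmpl{K} \), and \( (\gamma_+)^2=0 \) forces the perturbation to map everything into \( K \), so \( \Hsch-E\Id \) is block-triangular; injectivity and surjectivity are then checked component-by-component, and boundedness of the inverse is extracted only at the end, from the closedness hypothesis via the Closed Graph Theorem. You encode the same structural fact---nilpotency of the perturbation---purely algebraically: \( N=\Hsch-H \) satisfies \( N^2=0 \) because \( \gamma^j\gamma_+\gamma^k\gamma_+=0 \) while the scalar factors \( \partial_j V \) and the resolvent \( (H-E)^{-1} \) commute with the constant matrices, so \( T=N(H-E)^{-1} \) is a bounded square-zero operator (this is where, and only where, closedness of \( \Hsch \) on \( D(H) \) enters your argument, yielding \( H \)-boundedness of \( N \) by the Closed Graph Theorem), and the factorisation \( \Hsch-E=(\Id+T)(H-E) \) on \( D(H) \) gives the explicit resolvent \( (\Hsch-E)^{-1}=(H-E)^{-1}(\Id-T) \). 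Your \( T^2=0 \) is precisely the operator-theoretic form of the paper's strict triangularity, but the factorisation buys you an explicit terminating-Neumann-series formula for the resolvent, removes the need to verify injectivity and surjectivity by hand, and in fact never uses self-adjointness of \( H \) (only that \( H \) is closed with the stated scalar structure), so your version is marginally more general; the paper's version, in turn, avoids any discussion of relative boundedness and makes the geometric mechanism---the range of the perturbation lying inside \( \ker\gamma_+ \)---directly visible. Two cosmetic remarks: \( T \) needs no ``extension,'' being everywhere defined outright since \( (H-E)^{-1} \) maps the Hilbert space into \( D(H)\subseteq D(N) \); and your sliding step is sound exactly because only the constant matrix factors, never the functions \( \partial_j V \), pass through \( (H-E)^{-1} \), which you correctly flag.
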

\begin{proof}
	Denote \( K = \ker \gamma_+ \).
	Since \( (\gamma_+)^2 = 0 \), we know that \( K \neq 0 \).
	Let \( \ocmpl{K} \) be the orthogonal complement of \( K \) so that the Hilbert space can be written as \( K\oplus\ocmpl{K} \).
	
	Consider an arbitrary state \( \ket{\psi} = \ket{\psi_K}\oplus\ket{\psi_\perp}\in K\oplus\ocmpl{K} \).
	Applying \( \Hsch - E\Id \) (for \( E\in\CC \) and \( \Id \) the identity operator), we find that
	\begin{align*}
		(\Hsch - E\Id)\ket{\psi} &= (H - E\Id)\ket{\psi} -  \gamma^j\frac{i}{\beta}\pdv{V}{x_j}\gamma_+\ket{\psi_K} + \gamma_+\gamma^j\frac{i}{\beta}\pdv{V}{x_j}\ket{\psi_\perp}\\
					 &= (H - E\Id)\ket{\psi} + \gamma_+\gamma^ji\pdv{V}{x_j}\ket{\psi_\perp}.
	\end{align*}

	The gamma matrix \( \gamma_+ \) commutes with \( H \),
	so \( \gamma_+(H\ket{\psi_K}) = 0 \) and \( H\ket{\psi_K} \in K \).
	Furthermore, \( H\ket{\psi_\perp}\in K^\perp \);
	indeed, we know that \( H \) is self-adjoint,
	so for any \( \ket{\varphi}\in K \) we have \( \bra{\varphi}(H\ket{\psi_\perp}) = \conj{(\bra{\psi_\perp}H\ket{\varphi})}=0 \) (because we have just shown that \( H\ket{\varphi}\in K \)).
	Finally, since \( (\gamma_+)^2 = 0 \) we have that \( \gamma_+\gamma^j(i/\beta)\pdv{V}{x_j}\in K \). 
	Thus,
	\[
		(\Hsch - E\Id)\ket{\psi} = \left((H-E\Id)\ket{\psi_K}+\gamma_+\gamma^j\frac{i}{\beta}\pdv{V}{x_j}\ket{\psi_\perp}\right) \oplus (H - E\Id)\ket{\psi_\perp}\in K\oplus\ocmpl{K}.
	\]

	Now, assume that \( H-EI \) has a bounded inverse.

	For injectivity,
	suppose \( (\Hsch - E\Id)\ket{\psi} = 0 \).
	Then \( (H-EI)\ket{\psi_{\perp}} = 0  \) and hence \( \ket{\psi_\perp} = 0 \).
	But since \( \ket{\psi_\perp} = 0 \), we have that \( (H-EI)\ket{\psi_K} = 0 \) and hence \( \ket{\psi_K}= 0 \).
	Thus, \( \ket{\psi} = 0 \) and \( \Hsch - E\Id \) is injective.

	For surjectivity, take an arbitrary \( \ket{\varphi} = \ket{\varphi_K}\oplus\ket{\varphi_\perp}\in K\oplus \ocmpl{K} \).
	By surjectivity of \( H-EI \), we can choose \( \ket{\psi_\perp}\in\ocmpl{K} \) such that 
	\( (H-EI)\ket{\psi_\perp} = \ket{\varphi_\perp} \)
	and then choose \( \ket{\psi_K}\in K \) such that \( (H-EI)\ket{\psi_K} = \ket{\varphi_K} - \gamma_+\gamma^j(i/\beta)\pdv{V}{x_j}\ket{\psi_\perp}  \).
	That is, \( (\Hsch-EI)\ket{\psi} = \ket{\varphi} \) and \( \Hsch - EI \) is surjective.

	Finally, \( \Hsch - EI \) is closed by assumption, and so its inverse is bounded by the Closed Graph Theorem.

	Taking the contrapositive, if \( E \) is in the spectrum of \( \Hsch \) then \( E \) is in the spectrum of \( H \).
\end{proof}

If we take the representation in~\eqref{eq:gammamatrixrep} with a free or harmonic potential,
then we can easily verify that \( H \) is self-adjoint with the same domain as \( \Hsch \).
Moreover, observe that
\[
	\gamma_+\gamma^j\frac{i}{\beta}\pdv{V}{x_j} = -\gamma_+\frac{i}{\beta}V'(x_1) = \adjt{\left(\gamma_-\frac{i}{\beta}\conj{(V'(x_1))}\right)}
\]
and so \( \Hsch \) is closed in this representation, being the adjoint of \( H+\gamma_-(i/\beta)\conj{(V'(x_1))} \)
(the adjoint of an operator is always closed).
Thus, we are justified in applying \Cref{prop:samespectrum} to the operators used in this paper.

\subsection{Colour Lie superalgebras}
The \( \Ztzt \)-graded colour Lie superalgebras generalise Lie superalgebras to grading by the group \( \Ztzt \).
Within this paper,
every \dfn{\( \Ztzt \)-graded colour Lie superalgebra}
appears as a \( \Ztzt \)-graded space of operators \( \g = \g_{00}\oplus\g_{01}\oplus\g_{10}\oplus\g_{11} \) acting on a Hilbert space of states.
This space \( \g \) is equipped with a so-called colour bracket \( \cbrak{\cdot}{\cdot}\colon\g\times\g\to\g \) realised as
\[
	\cbrak{x}{y} = xy - \varepsilon(\alpha,\beta)yx \qquad \text{for}\;x\in\g_{\alpha},\,y\in\g_{\beta}\;\text{and satisfying}\;\cbrak{x}{y}\in\g_{\alpha+\beta}
\]
with \dfn{commutation factor} \( \varepsilon(\alpha,\beta) = (-1)^{\alpha_1\beta_1 + \alpha_2\beta_2} \) where \( \alpha=(\alpha_1,\alpha_2),\,\beta=(\beta_1,\beta_2)\in\Ztzt \).
%That this generalises the definition of a Lie superalgebra should be clear.
By replacing the commutation factor \( \varepsilon \) with \( \varepsilon(\alpha,\beta) = (-1)^{\alpha_1\beta_2 - \alpha_2\beta_1} \), we obtain what is called a \dfn{\( \Ztzt \)-graded colour Lie algebra} (as opposed to a \emph{super}algebra; however, despite the names, colour Lie (super)algebras are neither Lie algebras nor Lie superalgebras).
The \( \Ztwo[n] \)-graded colour Lie superalebras are the natural generalisation of \( \Ztzt \)-graded colour Lie superalgebras.
The above definitions are sufficient for this paper;
we direct the interested reader to the more general definitions of colour Lie algebras in~\cite{RW1978a,RW1978b,Scheunert1979}.

In~\cite{AKTT2016}, the authors performed an exhaustive search of the first-order differential operators which satisfied
\[
	\comm{\Omega}{Z}=\Phi_Z(\vec{x})\Omega \qquad \text{or} \qquad \acomm{\Omega}{Z}=\Phi_Z(\vec{x})\Omega 
\]
where \( \Omega \) is the L\'evy-Leblond operator and \( \Phi_Z \) is a matrix-valued function of the space coordinates.
By investigating the algebras generated by these operators, the authors found a \( \Ztzt \)-graded colour Lie superalgebra as an extension of the Schr\"odinger symmetry algebra.
However, it was proved that there is no \( \Ztzt \)-graded extension of the Schr\"odinger symmetry algebra for the L\'evy-Leblond equation with harmonic potential~\cite{AKTT2016}.

\section{Eigenspace-preserving operators of the free equation}\label{sec:free}
Inspired by the symmetry algebras discovered in~\cite{AKTT2016,AKTT2017}, we wish to identify the operators which leave the \( \gamma_+ \)-eigenspaces invariant,
and determine if any \( \Ztzt \)-graded colour Lie (super)algebras appear.
We will work in natural units, \( \hbar = 1 \).

%\subsection{Initial observations}
Recall from~\eqref{eq:TimeIndependentLevyLeblond} that the time-independent \( (1+1) \)-dimensional L\'evy-Leblond equation (for the Schr\"odinger equation) with free potential is
\begin{equation}\label{eq:1+1TimeIndependentLevyLeblond}
	(\gamma_- \beta - \gamma^1 i \partial_1) \psi(x_1) = \gamma_+ E \psi(x_1)
\end{equation}
where \( t \) is the time coordinate and \( x_1 \) is the first (and only) spacial coordinate.
For ease of reading, we will omit the subscript and write \( x \) as the spacial coordinate.

By \Cref{prop:samespectrum},
\( E \) is in the spectrum of the free Schr\"odinger Hamiltonian \( -(1/\beta)(\partial_1)^2 \).
Since the free Schr\"odinger Hamiltonian is proportional to the square of the self-adjoint momentum operator \( -i\partial_1 \),
we have that \( E \in [0,\infty) \).

\subsection{Eigenvalue conditions}
Let \( \ket{\psi} \equiv \psi(x) \) be a solution to the time-independent equation~\eqref{eq:1+1TimeIndependentLevyLeblond}.
We wish to find an operator \( A \) that leaves the \( \gamma_+ \)-eigenspaces invariant;
that is,
\begin{equation}\label{eq:EigspaceInvariant}
	(\gamma_-\beta - \gamma^1 i \partial_1)A\ket{\psi} = \gamma_+ E A\ket{\psi}
\end{equation}
for \( \ket{\psi} \) a solution to the time-independent equation~\eqref{eq:1+1TimeIndependentLevyLeblond}.

Note that we can rearrange equation~\eqref{eq:1+1TimeIndependentLevyLeblond} to solve for \( \partial_1 \):
\begin{equation}\label{eq:DifferentialConsequence}
	\begin{aligned}
		\partial_1 \ket{\psi} &= (\gamma^1\gamma_+ iE  - \gamma^1\gamma_-i\beta)\ket{\psi}\\
				      &= (\gamma_+ iE + \gamma_-i\beta)\ket{\psi}
	\end{aligned}
\end{equation}
using relations~\eqref{eq:gammaMatricesFreeEquation}.
%Equation~\eqref{eq:DifferentialConsequence} shows that 
That is,
the operator \( \partial_1 \) can be expressed in terms of gamma matrices when acting on a \( \gamma_+ \)-eigenspace.
Therefore, 
if we only consider the action on a \( \gamma_+ \)-eigenspace,
\( A \) does not need to contain any differential operators
(this fact is unique to the \( (1+1) \)-dimensional case).
So the general form of \( A \) (restricted to a \( \gamma_+ \)-eigenspace) is
\begin{equation}\label{eq:1+1InvariantOperatorGeneralForm}
	A = c_{\Id}(x)\Id + c_+(x) \gamma_+ + c_-(x) \gamma_- + c_1(x)\gamma^1
\end{equation}
for some complex-valued functions \( c_{\Id},\, c_+,\, c_-,\, c_1 \).
We will assume that these functions are differentiable.

We then substitute the general \( A \) of~\eqref{eq:1+1InvariantOperatorGeneralForm} into~\eqref{eq:EigspaceInvariant},
and use relations~\eqref{eq:gammaMatricesFreeEquation}, \eqref{eq:DifferentialConsequence}
and the product rule: \( \partial_1 f(x) = f(x) \partial_1 + f'(x) \).
In doing so, we obtain a coupled ODE:
\begin{equation}\label{eq:1+1ODE}
	\left\{
		\begin{aligned}
			\dv{c_{\Id}}{x} &= 0 \\
			\dv{c_+}{x} &= -2iE c_1(x)\\
			\dv{c_-}{x} &= 2i\beta c_1(x)\\
			\dv{c_1}{x} &= -i\beta c_+(x) + iE c_-(x).
		\end{aligned}
	\right.
\end{equation}
Obviously \( c_{\Id}(x) \equiv c_{\Id} \) is a constant.
Defining a new variable \( c(x) = -i\beta c_+(x) + iEc_-(x) \), yields the following coupled ODE:
\[
	\left\{
		\begin{aligned}
			\dv{c}{x} &= -4E\beta c_1(x) \\
			\dv{c_1}{x} &= c(x).
		\end{aligned}
	\right.
\]
which we can solve to find that
\begin{align*}
	c_1(x) &= ae^{i2\sqrt{E\beta}x} + be^{-i2\sqrt{E\beta}x} \\
	c(x) &= i2\sqrt{E\beta}(ae^{i2\sqrt{E\beta}x} - be^{-i2\sqrt{E\beta}x})\\
	c_+(x) &= -a \sqrt{\tfrac{E}{\beta}} e^{i2\sqrt{E\beta}x} + b \sqrt{\tfrac{E}{\beta}} e^{-i2\sqrt{E\beta}x} + \frac{d}{\beta}\\
	c_-(x) &= a \sqrt{\tfrac{\beta}{E}} e^{i2\sqrt{E\beta}x} - b \sqrt{\tfrac{\beta}{E}} e^{-i2\sqrt{E\beta}x} + \frac{d}{E}
\end{align*}
for some constants \( a,\, b,\, d \) and
assuming \( E \) and \( \beta \) are non-zero.
Substituting the expressions we have computed for \( c_I,\, c_+,\, c_-,\, c_1 \) back into the general form of \( A \) in \eqref{eq:1+1InvariantOperatorGeneralForm}, we find four linearly independent operators:
\begin{gather}
	I \label{eq:eigvalopI}\\
	\gamma_+ iE + \gamma_- i\beta \label{eq:eigvalopD}\\
	-\gamma_+ \sqrt{\frac{E}{\beta}} e^{i2\sqrt{E\beta} x} + \gamma_- \sqrt{\frac{\beta}{E}} e^{i2\sqrt{E\beta} x} + \gamma^1 e^{i2\sqrt{E\beta} x} \label{eq:eigvalopR1}\\
	\gamma_+ \sqrt{\frac{E}{\beta}} e^{-i2\sqrt{E\beta} x} - \gamma_- \sqrt{\frac{\beta}{E}} e^{-i2\sqrt{E\beta} x} + \gamma^1 e^{-i2\sqrt{E\beta} x} \label{eq:eigvalopR2}.
\end{gather}
We can easily verify that these operators leave the \( \gamma_+ \)-eigenspaces invariant.

For a specific choice of \( E \), the operators~\labelcref{eq:eigvalopI,eq:eigvalopD,eq:eigvalopR1,eq:eigvalopR2} are only defined on the corresponding \( \gamma_+ \)-eigenspace.
If we want operators defined everywhere on (a dense subset of) the Hilbert space, then
 we can interpret~\labelcref{eq:eigvalopI,eq:eigvalopD,eq:eigvalopR1,eq:eigvalopR2} as eigenvalue conditions which the desired operators must satisfy.
%We could possibly appeal to the functional calculus from the Spectral Theorem to obtain operators defined everywhere.
%However, we are able to obtain some operators from \labelcref{eq:eigvalopI,eq:eigvalopD,eq:eigvalopR1,eq:eigvalopR2} without using this functional calculus.

\subsection{Finding operators}
From~\eqref{eq:DifferentialConsequence} we immediately see that the operator \( \partial_1 \) has the same action as~\eqref{eq:eigvalopD} on the \( \gamma_+ \)-eigenspace.
We will multiply through by \( -i \) and instead consider the self-adjoint momentum operator \( -i\partial_1 \).
Since \( -i\partial_1 \) commutes with the L\'evy-Leblond Hamiltonian \( \Hll = \gamma_-\beta - \gamma^1 i \partial_1 \) and with \( \gamma_+ E \),
we can easily verify that 
%\begin{equation}\label{eq:Dinvariant}
%	\Hll\partial_1 \ket{\psi} = \partial_1 \Hll\ket{\psi} = \partial_1 \gamma_+ E\ket{\psi} = \gamma_+ E \partial_1\ket{\psi}
%\end{equation}
%so
\( -i\partial_1 \) does indeed leave the \( \gamma_+ \)-eigenspaces invariant.
Surprisingly, \eqref{eq:eigvalopD} also gives rise to another operator.
If we multiply~\eqref{eq:eigvalopD} by \( -i\beta \) and use the fact that \( E \) is an eigenvalue of the Schr\"odinger Hamiltonian, then we obtain the operator \( -\gamma_+(\partial_1)^2 + \gamma_- \beta^2 \).
This operator does leave the \( \gamma_+ \)-eigenspaces invariant:
\begin{align*}
	&\Hll(-\gamma_+(\partial_1)^2 + \gamma_- \beta^2)\ket{\psi}\\
	&= -(-\gamma_+(\partial_1)^2 + \gamma_- \beta^2)\Hll\ket{\psi} + \acomm{\Hll}{-\gamma_+(\partial_1)^2 + \gamma_- \beta^2}\ket{\psi}\\
	&= -(-\gamma_+(\partial_1)^2 + \gamma_- \beta^2)\gamma_+E\ket{\psi} + I\beta^2\left(-\frac{1}{\beta}(\partial_1)^2\right)\ket{\psi}\\
	&= \gamma_+E(-\gamma_+(\partial_1)^2 + \gamma_- \beta^2)\ket{\psi} - \acomm{-\gamma_+(\partial_1)^2 + \gamma_- \beta^2}{\gamma_+E}\ket{\psi} + I\beta^2E\ket{\psi}\\
	&= \gamma_+E(-\gamma_+(\partial_1)^2 + \gamma_- \beta^2)\ket{\psi} - I\beta^2E\ket{\psi} + I\beta^2E\ket{\psi}\\
	&= \gamma_+E(-\gamma_+(\partial_1)^2 + \gamma_- \beta^2)\ket{\psi}.
\end{align*}

Now, we shall examine \eqref{eq:eigvalopR2}.
We need to find an operator which has \( \ket{\psi} \) as an eigenstate with eigenvalue involving \( \sqrt{E} \).
We would expect the eigenvalues of \( (\partial_1)^2 \) to be the squares of the eigenvalues of \( \partial_1 \); that is, we expect the eigenvalues of \( \partial_1 \) to be \( \pm i\sqrt{E\beta} \).
For now, we shall assume that \( \ket{\psi} \) is an eigenstate of \( \partial_1 \) with eigenvalue \( i\sqrt{E\beta} \).
With this eigenvalue, we can find an operator corresponding to \( e^{-2i\sqrt{E\beta} x} \):
\begin{align*}
	e^{-2i\sqrt{E\beta} x} \ket{\psi} 
	&= \sum_{k=0}^{\infty} \frac{(-2x)^k (i\sqrt{E\beta})^k \psi(x)}{k!} && \text{where}~\psi(x)\equiv\ket{\psi}\\
	&= \sum_{k=0}^{\infty} \frac{(-2x)^k (\partial_1)^k \psi(x)}{k!} \\
	&= \sum_{k=0}^{\infty} \frac{\psi^{(k)}(x)}{k!} (-x - x)^k.
\end{align*}
This final expression is the Taylor series for \( \psi(-x) \) about the point \( x \).
If we assume that \( \psi \) is analytic everywhere, then \( e^{-2i\sqrt{E\beta} x}\ket{\psi} = \psi(-x) \).
So, on an analytic function, \( e^{-2i\sqrt{E\beta}} \) acts the same as the parity operator \( \parity \),
defined by \( \parity (\chi(x)) = \chi(-x) \) for all states \( \chi \).
%(see~\cite[Chapter 4, Section 2]{SakuraiNapolitano2013} for a discussion on parity).
Note that \( \partial_1 \parity = - \parity \partial_1 \) by the chain rule.

Assuming that \( \ket{\psi} \) is an analytic function, from~\eqref{eq:eigvalopR2} we compute:
\begin{align*}
	&\left(\gamma_+ \sqrt{\frac{E}{\beta}} e^{-i2\sqrt{E\beta} x} - \gamma_- \sqrt{\frac{\beta}{E}} e^{-i2\sqrt{E\beta} x} + \gamma^1 e^{-i2\sqrt{E\beta} x}\right)\ket{\psi}\\
	&= \left(\gamma_+ \frac{iE}{i\sqrt{E\beta}} \parity - \gamma_- \frac{i\beta}{i\sqrt{E\beta}} \parity + \gamma^1 \parity\right)\ket{\psi} \\
	&= \parity\left(\gamma^1 \frac{1}{i\sqrt{E\beta}}\partial_1 + \gamma^1\right)\ket{\psi} \\
	&= 2\gamma^1\parity\ket{\psi}
\end{align*}
using~\eqref{eq:DifferentialConsequence} and the fact that \( \partial_1 \) has eigenvalue \( i\sqrt{E\beta} \).
Therefore, we guess that \( \gamma^1\parity \) is an operator which leaves the \( \gamma_+ \)-eigenspace invariant.
And indeed, we can verify that this is true for all eigenstates \( \ket{\psi} \) (not just analytic ones):
\begin{align*}
	\Hll(\gamma^1\parity \ket{\psi})
	&= \gamma_-\beta\parity \ket{\psi} - i\partial_1\parity \ket{\psi} \\
	&= \gamma_-\beta\parity \ket{\psi} + i\parity \partial_1\ket{\psi} \\
	&= \gamma_-\beta\parity \ket{\psi} + i\parity (\gamma_+iE + \gamma_-i\beta)\ket{\psi} \\
	&= -\gamma_+E \parity \ket{\psi} \\
	&= \gamma_+E (\gamma^1\parity \ket{\psi}).
\end{align*}
Note that we did not use the assumption that the eigenvalue of \( \partial_1 \) is \( i\sqrt{E\beta} \) in the above computation.
 
To find \( \gamma^1\parity \), we assumed that \( \partial_1 \) has eigenvalue \( i\sqrt{E\beta} \).
If we instead assume that \( \partial_1 \) has eigenvalue \( -i\sqrt{E\beta} \),
then the operator we obtain from~\eqref{eq:eigvalopR2} is \( 0 \), since 
\begin{align*}
	&\left(\gamma_+ \sqrt{\frac{E}{\beta}} e^{-i2\sqrt{E\beta} x} - \gamma_- \sqrt{\frac{\beta}{E}} e^{-i2\sqrt{E\beta} x} + \gamma^1 e^{-i2\sqrt{E\beta} x}\right)\ket{\psi}\\
	&= e^{-i2\sqrt{E\beta} x}\left(\gamma_1 \frac{1}{i\sqrt{E\beta}}\partial_1 + \gamma^1\right)\ket{\psi}.
\end{align*}
The behaviour for~\eqref{eq:eigvalopR1} is similar:
if \( \partial_1 \) has eigenvalue \( -i\sqrt{E\beta} \) then we again obtain the operator \( \gamma^1\parity \) and if \( \partial_1 \) has eigenvalue \( i\sqrt{E\beta} \) then we obtain the zero operator.

So far, we have identified three operators which leave the \( \gamma_+ \)-eigenspace invariant:
\[
	\partial_1, \qquad -\gamma_+(\partial_1)^2 + \gamma_- \beta^2,\qquad \gamma^1\parity.
\]
This is not an exhaustive list of such operators.
Indeed, composing any number of these three operators will yield further operators which leave the \( \gamma_+ \)-eigenspaces invariant.
However, we will not consider any further \( \gamma_+ \)-eigenspace-preserving operators in this paper.

%Arguments identical to the ones for these operators also show that
%\[
%	(\partial_1)^k, \qquad -\gamma_+(\partial_1)^{k+2} + \gamma_-\beta^2(\partial_1)^k,\qquad \gamma^1\parity(\partial_1)^k
%\]
%(for \( k=1,2,\ldots \))
%also leave the \( \gamma_+ \)-eigenspace invariant.
%We can also obtain these operators from multiplying~\labelcref{eq:eigvalopD,eq:eigvalopR1,eq:eigvalopR2} by \( (\pm i\sqrt{E\beta})^{k-1} \) (the eigenvalue of \( (\partial_1)^{k-1} \)).
%Additionally, we can replace \( -i\beta\partial_1 \) with \( -\gamma_+(\partial_1)^2+\gamma_-\beta^2 \) since both act as \( \gamma_+E\beta + \gamma_- \beta^2 \) on the \( \gamma_+ \)-eigenspace.
%Therefore,
%\begin{gather*}
%	(\partial_1)^k(-\gamma_+(\partial_1)^2 + \gamma_-\beta^2)\\
%	(-\gamma_+(\partial_1)^{k+2} + \gamma_-\beta^2(\partial_1)^k)(-\gamma_+(\partial_1)^2 + \gamma_-\beta^2)\\
%	\gamma^1\parity(\partial_1)^k(-\gamma_+(\partial_1)^2 + \gamma_-\beta^2)
%\end{gather*}
%also leave the \( \gamma_+ \)-eigenspace invariant. (Note that \( (-\gamma_+(\partial_1)^2 + \gamma_-\beta^2)^2 = -(\beta\partial_1)^2 \) so we do not need to consider higher powers.)

\subsection{Grading the operators}
We have found operators which leave the \( \gamma_+ \)-eigenspaces invariant.
We now wish to determine if these operators close to form an algebra.

Let,
\[
	\Hsch = -\frac{1}{\beta}(\partial_1)^2,
	\qquad\widehat{P} = -i\partial_1,
	\qquad D_+ = -\gamma_+\frac{1}{\beta}(\partial_1)^2 + \gamma_- \beta,
	\qquad \parity^1 = \gamma^1\parity.
\]
Computing the complete set of commutation and anticommutation relations we find
\begin{equation}\label{eq:1+1InvariantRelations}
	\begin{aligned}
		\comm*{\widehat{P}}{\widehat{P}} &= 0 & 
		\acomm*{\widehat{P}}{\widehat{P}} &= 2\beta\Hsch\\
		\comm*{D_+}{D_+} &= 0 & 
		\acomm*{D_+}{D_+} &= 2\beta\Hsch \\
		\comm*{\parity^1}{\parity^1} &= 0 & 
		\acomm*{\parity^1}{\parity^1} &= 2\Id\\
		\comm*{\widehat{P}}{D_+} &= 0 &
		\acomm*{\widehat{P}}{D_+} &= 2D_+\widehat{P} \\
		\comm*{\widehat{P}}{\parity^1} &= -2\parity^1\widehat{P} &
		\acomm*{\widehat{P}}{\parity^1} &= 0 \\
		\comm*{D_+}{\parity^1} &= -2\parity^1D_+ & 
		\acomm*{D_+}{\parity^1} &= 0\\
	\end{aligned}
\end{equation}
and \( \Hsch \) commutes with the other three operators.

We can create colour Lie (super)algebras by assigning gradings to these operators to choose which of the above (anti)commutation relations are expressed.

\begin{thm}
	Let \( \mathfrak{A},\, \mathfrak{D}^+,\, \mathfrak{D}^1,\, \mathfrak{D} \) be vector spaces.
	Define \( \Ztzt \)-graded sectors for \( \mathfrak{A},\, \mathfrak{D}^+,\, \mathfrak{D}^1 \)
	and \( \Ztwo[3] \)-graded sectors for \( \mathfrak{D} \) as follows:
	\begin{align*}
		\mathfrak{A}_{00} &= 0, & \mathfrak{A}_{01} &= \spn\{\widehat{P},D_+\}, & \mathfrak{A}_{10} &= \spn\{\parity^1\}, & \mathfrak{A}_{11} &= 0,\\[1ex]
		\mathfrak{D}^+_{00} &= \spn\{\Id,\Hsch\}, & \mathfrak{D}^+_{01}&=\spn\{\widehat{P}\}, & \mathfrak{D}^+_{10}&=\spn\{D_+\}, & \mathfrak{D}^+_{11} &= \spn\{\parity^1\},\\[1ex]
		\mathfrak{D}^1_{00} &= \spn\{\Id\}, & \mathfrak{D}^1_{01}&=\spn\{\parity^1\}, & \mathfrak{D}^1_{10}&=0, & \mathfrak{D}^1_{11} &= \spn\{\widehat{P},D_+\},\\[1ex]
		\mathfrak{D}_{000} &= \spn\{\Id,\Hsch\}, & \mathfrak{D}_{001} &= \spn\{\widehat{P}\}, & \mathfrak{D}_{010} &= \spn\{D_+\}, & \mathfrak{D}_{011} &= 0, \\[-1ex]
		\mathfrak{D}_{100}&= 0, & \mathfrak{D}_{101}&=0, & \mathfrak{D}_{110}&=0, & \mathfrak{D}_{111} &= \spn\{\parity^1\}.
	\end{align*}
	We then define \( \mathfrak{A},\, \mathfrak{D}^+,\, \mathfrak{D}^1,\, \mathfrak{D} \) to be the direct sum of their respective sectors.
	Define a bracket on each space by \( \cbrak{A}{B} = AB - \varepsilon(\alpha,\beta) BA \) for \( A,\, B \) homogeneous of degree \( \alpha,\,\beta \) respectively.
	Here, \( \varepsilon \) is the commutation factor for the algebra
	\begin{align*}
		\text{for}~\mathfrak{A}\colon&& \varepsilon(\alpha_1\alpha_2,\beta_1\beta_2) &= (-1)^{\alpha_1\cdot \beta_2 - \alpha_2\cdot \beta_1} \\
		\text{for}~\mathfrak{D}^+~\text{and}~\mathfrak{D}^1\colon&& \varepsilon(\alpha_1\alpha_2,\beta_1\beta_2) &= (-1)^{\alpha_1\cdot \beta_1 + \alpha_2\cdot \beta_2} \\
		\text{for}~\mathfrak{D}\colon && \varepsilon(\alpha_1\alpha_2\alpha_3,\beta_1\beta_2\beta_3) &= (-1)^{\alpha_1\cdot\beta_1 + \alpha_2\cdot\beta_2 + \alpha_3\cdot\beta_3}.
	\end{align*}
	Then \( \mathfrak{D}^+ \) closes to form a colour Lie algebra and \( \mathfrak{A} \), \( \mathfrak{D}^- \), and \( \mathfrak{D} \) close to form colour Lie superalgebras.
\end{thm}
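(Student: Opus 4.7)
The plan is to exploit the fact that each candidate algebra is realised as a graded subspace of an associative algebra of operators, with the colour bracket defined by $\cbrak{A}{B} = AB - \varepsilon(\alpha,\beta)BA$. For such realisations, the colour Jacobi identity is automatic: it is a standard consequence of associativity of composition together with the bicharacter property of $\varepsilon$ (as laid out in~\cite{Scheunert1979}). Similarly, $\varepsilon$-antisymmetry of the bracket is immediate from the definition. Thus the only nontrivial content of the theorem is \emph{grading compatibility}, namely that for homogeneous generators $A\in\g_{\alpha}$, $B\in\g_{\beta}$, the element $\cbrak{A}{B}$ lies in $\g_{\alpha+\beta}$.

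My plan is to verify grading compatibility one algebra at a time by running through all pairs of basis elements and comparing against the relations collected in~\eqref{eq:1+1InvariantRelations}. For each pair $(A,B)$ the first step is to evaluate the commutation factor $\varepsilon(\alpha,\beta)$ to determine whether $\cbrak{A}{B}$ reduces to a commutator or an anticommutator, then to read off the value of that (anti)commutator from~\eqref{eq:1+1InvariantRelations}, and finally to check that the result sits in the sector of total degree $\alpha+\beta$. For instance, in $\mathfrak{D}^+$ both $\widehat{P}\in\mathfrak{D}^+_{01}$ and $D_+\in\mathfrak{D}^+_{10}$ satisfy $\varepsilon((0,1),(0,1))=\varepsilon((1,0),(1,0))=-1$, so the brackets become $\acomm*{\widehat{P}}{\widehat{P}}=\acomm*{D_+}{D_+}=2\beta\Hsch\in\mathfrak{D}^+_{00}$, consistent with the grading. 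The mixed pair has $\varepsilon((0,1),(1,0))=+1$, giving $\comm*{\widehat{P}}{D_+}=0\in\mathfrak{D}^+_{11}$, again consistent. The remaining pairs, and the analogous case analyses for $\mathfrak{A}$ (where $\varepsilon$ is the colour-Lie factor and one must check in particular that pairs with odd commutation factor give anticommutators into sectors such as $\mathfrak{A}_{11}=0$), for $\mathfrak{D}^1$, and for the $\Ztwo[3]$-graded $\mathfrak{D}$ (where one should note e.g.\ that $(1,1,1)+(1,1,1)=(0,0,0)$ so the relation $\acomm*{\parity^1}{\parity^1}=2\Id$ is consistent with the assignment $\parity^1\in\mathfrak{D}_{111}$, and that $\Hsch$ must land in $\mathfrak{D}_{000}$ since $(001)+(001)=(010)+(010)=(000)$), follow by the same pattern.

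The main obstacle, such as it is, is purely bookkeeping: each of the four algebras uses a different grading assignment, and two different commutation factors appear, so one must take care at each step not to confuse the colour-Lie sign $(-1)^{\alpha_1\beta_2-\alpha_2\beta_1}$ (for $\mathfrak{A}$) with the colour-Lie-super sign $(-1)^{\alpha\cdot\beta}$ (for the other three), and to remember that a $+1$ forces the use of the commutator column in~\eqref{eq:1+1InvariantRelations} while a $-1$ forces the use of the anticommutator column. A convenient way to package the verification, which I would use in the writeup, is to present one table per algebra listing the ordered pairs of homogeneous generators, the value of $\varepsilon$, the relevant bracket from~\eqref{eq:1+1InvariantRelations}, and the target sector; once these tables check out, the theorem follows immediately from the general fact that any $\varepsilon$-graded subspace of an associative algebra that is closed under the $\varepsilon$-bracket is a colour Lie (super)algebra.
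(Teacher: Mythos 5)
Your proposal is correct and follows essentially the same route as the paper, whose proof is exactly the sector-by-sector verification against the relations in~\eqref{eq:1+1InvariantRelations} that the bracket of the \( \alpha \)- and \( \beta \)-sectors lands in the \( (\alpha+\beta) \)-sector; your explicit remark that \( \varepsilon \)-antisymmetry and the colour Jacobi identity are automatic for an \( \varepsilon \)-bracket realised inside an associative operator algebra (as in~\cite{Scheunert1979}) is left implicit in the paper but is the same underlying justification. Your sample computations (e.g.\ \( \varepsilon((0,1),(0,1))=-1 \) forcing \( \acomm*{\widehat{P}}{\widehat{P}}=2\beta\Hsch\in\mathfrak{D}^+_{00} \), and \( (1,1,1)+(1,1,1)=(0,0,0) \) accommodating \( \acomm*{\parity^1}{\parity^1}=2\Id \)) check out, so completing the tables as you describe reproduces the paper's argument.
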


\begin{proof}
	By examining~\eqref{eq:1+1InvariantRelations}, we verify that the colour bracket of the \( \alpha \)-sector and the \( \beta \)-sector is a subspace of the \( (\alpha+\beta) \)-sector (for all possible sectors).
\end{proof}

If we want to construct an algebra that contains only \( \widehat{P},\,D_+,\,\parity^1 \),
then we need to define a bracket \( \cbrak{\cdot}{\cdot} \) such that \( \cbrak*{\widehat{P}}{\widehat{P}} = \comm*{\widehat{P}}{\widehat{P}} \)
but \( \cbrak*{\widehat{P}}{\parity^1} = \acomm*{\widehat{P}}{\parity^1} \).
This is not possible with a Lie superalgebra, but is possible with a \( \Ztzt \)-graded colour Lie (super)algebra such as \( \mathfrak{A} \). However, \( \mathfrak{A} \) is somewhat trivial because \( \cbrak{A}{B} = 0 \) for all \( A,B \in \mathfrak{A} \).

The most interesting relations of~\eqref{eq:1+1InvariantRelations} are \( \acomm{D_+}{D_+} = \beta\Hsch \) and \( \acomm{\parity^1}{\parity^1} = 2I \).
We would hope to capture these relations in the colour Lie (super)algebra.
Two options for a \( \Ztzt \)-graded algebra
which realise \( \acomm{D_+}{D_+} \) or \( \acomm{\parity^1}{\parity^1} \) are \( \mathfrak{D}^+ \) and \( \mathfrak{D}^1 \) (respectively).

It is not possible to realise both \( \acomm{D_+}{D_+} \) and \( \acomm{\parity^1}{\parity^1} \) simultaneously with a \( \Ztzt \)-grading on at most five basis elements.
We can, however, do this with the \( \Ztwo[3] \)-graded algebra \( \mathfrak{D} \).

By considering only symmetry operators that are first- or second-order differential operators,
\cite{AKTT2016,AKTT2017} found a \( \Ztzt \)-graded colour Lie superalgebra.
But observe that \( \parity^1 \) is not a differential operator (though is a symmetry operator).
Thus, the appearance of a \( \Ztwo[3] \)-grade in the above theorem suggests that a larger grading is required when considering a broader class of symmetry operators.
However, we are primarily concerned with the utility of \( \mathfrak{D} \) for finding solutions; a more in-depth analysis for the symmetry operators of the free equation is beyond the scope of this paper.

\subsection{Relationship between the algebra and solutions}\label{sec:solvefreeequation}
The \( \Ztwo[3] \)-graded algebra \( \mathfrak{D} \) contains two operators, \( \widehat{P} \) and \( D_+ \), which both square to the Schr\"odinger Hamiltonian \( \Hsch \).
Finding the eigenstates of either operator would then allow us to solve the Schr\"odinger equation.
We will prove that solving the L\'evy-Leblond equation is equivalent to finding \emph{simultaneous} eigenstates of these operators.
In a sense, this couples together these ``square roots'' of \( \Hsch \) and leads to the coupling of the components in the solutions to the L\'evy-Leblond equation.

\begin{thm}\label{thm:simulstates}
	Solving the time-independent L\'evy-Leblond equation~\eqref{eq:1+1TimeIndependentLevyLeblond} with \( E>0 \)
	is equivalent to
	finding the simultaneous eigenstates for \( \widehat{P} \) and \( D_+ \) with positive eigenvalues for both operators.
	Specifically,  \( \ket{\psi} \) is a solution to the time independent L\'evy-Leblond equation
	with \( E>0 \) if and only if
	\[
		\ket{\psi} = a\ket{\varphi_1} + b\parity^1\ket{\varphi_2} \qquad a,b\in\CC 
	\]
	for some simultaneous eigenstates \( \ket{\varphi_1},\, \ket{\varphi_2} \) of \( \widehat{P} \) and \( D_+ \) with all eigenvalues positive.
\end{thm}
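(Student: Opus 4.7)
The plan is to simultaneously diagonalise $\widehat{P}$ and $D_+$ on each $\Hsch$-eigenspace. From the relations~\eqref{eq:1+1InvariantRelations}, $\widehat{P}^2 = D_+^2 = \beta\Hsch$ and $\comm{\widehat{P}}{D_+} = 0$, so on the $\Hsch$-eigenspace with eigenvalue $E > 0$ each operator has spectrum $\{\pm k\}$ with $k = \sqrt{\beta E}$, and the eigenspace decomposes into four joint eigenspaces $V_{++}, V_{+-}, V_{-+}, V_{--}$, where $V_{s_1s_2}$ denotes the simultaneous $(s_1k,s_2k)$-eigenspace. The anticommutations $\acomm{\widehat{P}}{\parity^1} = \acomm{D_+}{\parity^1} = 0$ together with $(\parity^1)^2 = \Id$ show that $\parity^1$ is an involution swapping $V_{++} \leftrightarrow V_{--}$.

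For the forward direction, let $\ket{\psi}$ be a solution with $E > 0$. By \Cref{prop:samespectrum} (or by applying $\Hll$ to~\eqref{eq:1+1TimeIndependentLevyLeblond}), $\Hsch\ket{\psi} = E\ket{\psi}$. Rearranging via~\eqref{eq:DifferentialConsequence} yields $\widehat{P}\ket{\psi} = (\gamma_+E + \gamma_-\beta)\ket{\psi}$, and a short computation using $\Hsch\ket{\psi} = E\ket{\psi}$ shows $D_+\ket{\psi} = (\gamma_+E + \gamma_-\beta)\ket{\psi}$ too. Hence $(\widehat{P} - D_+)\ket{\psi} = 0$. Decomposing $\ket{\psi} = \psi_{++} + \psi_{+-} + \psi_{-+} + \psi_{--}$ along the four joint eigenspaces and imposing this equation forces $\psi_{+-} = \psi_{-+}$, and hence both vanish since they lie in distinct eigenspaces. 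Writing $\ket{\varphi_1} = \psi_{++}$ and $\ket{\varphi_2} = \parity^1\psi_{--}$ (both in $V_{++}$), we obtain $\ket{\psi} = \ket{\varphi_1} + \parity^1\ket{\varphi_2}$, giving the decomposition with $a = b = 1$.

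For the converse, it suffices to show that every $\ket{\varphi} \in V_{++}$ solves~\eqref{eq:1+1TimeIndependentLevyLeblond} with $E = k^2/\beta$; since $\parity^1$ preserves the $\gamma_+$-eigenspace (established in \Cref{sec:free}) and solutions form a vector space, $a\ket{\varphi_1} + b\parity^1\ket{\varphi_2}$ is then automatically a solution. Given $\widehat{P}\ket{\varphi} = D_+\ket{\varphi} = k\ket{\varphi}$, the identity $D_+\ket{\varphi} = (\gamma_+E + \gamma_-\beta)\ket{\varphi}$ (valid on the $\Hsch = E$ eigenspace) yields $(\gamma_+E + \gamma_-\beta - k)\ket{\varphi} = 0$. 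Combining this with $\Hll = \gamma^1\widehat{P} + \gamma_-\beta$ and the identity $\Id + \gamma^1 = 2\gamma_+\gamma_-$ from~\eqref{eq:gammaMatricesFreeEquation}, a short gamma-matrix manipulation produces $\Hll\ket{\varphi} = \gamma_+E\ket{\varphi}$, as required.

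The principal obstacle is this converse computation: while each ingredient is elementary, assembling the Clifford relations, the product formula $\gamma_+\gamma_- = (\Id + \gamma^1)/2$, and the decomposition $\Hll = \gamma^1\widehat{P} + \gamma_-\beta$ into the correct cancellation requires some care (one multiplies the derived identity by $\gamma_+$ to convert $\gamma_+\ket{\varphi}$ into a multiple of $(\Id+\gamma^1)\ket{\varphi}$, which then matches the $k\gamma^1\ket{\varphi}$ term appearing in $\Hll\ket{\varphi} - \gamma_+E\ket{\varphi}$). The forward direction, by contrast, is essentially linear algebra once the equality $\widehat{P} = D_+$ on the $\gamma_+$-eigenspace is observed.
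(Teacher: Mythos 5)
Your proof is correct, and in substance it is the paper's argument in more systematic packaging. The paper's forward direction sets \( \ket{\chi} = (1/\sqrt{E\beta})\widehat{P}\ket{\psi} \) and takes \( \ket{\varphi_1} = \ket{\psi}+\ket{\chi} \), \( \ket{\varphi_2} = \parity^1(\ket{\psi}-\ket{\chi}) \); these are precisely (twice) your projections of \( \ket{\psi} \) onto the \( \pm\sqrt{E\beta} \) sectors of \( \widehat{P} \), and in both proofs the crucial input is the same identity \( \widehat{P}\ket{\psi} = (\gamma_+E+\gamma_-\beta)\ket{\psi} = D_+\ket{\psi} \) from \eqref{eq:DifferentialConsequence} --- in your language it kills the mixed sectors \( V_{+-},V_{-+} \), in the paper's it makes \( \ket{\psi}\pm\ket{\chi} \) simultaneous eigenvectors. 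What your four-sector decomposition buys is uniformity: the degenerate case \( \widehat{P}\ket{\psi}=c\ket{\psi} \), which the paper treats by a separate case split with \( (a,b)=(1,0) \) or \( (0,1) \), is absorbed automatically; it is also worth noting (as you implicitly do) that the decomposition is purely algebraic, since \( \widehat{P}^2=D_+^2=\beta E \) on the eigenspace and \( \comm*{\widehat{P}}{D_+}=0 \) give idempotent projections without invoking any spectral theorem --- the same level of rigour as the paper. Two minor points: when \( \psi_{++} \) or \( \psi_{--} \) vanishes you should set \( a=0 \) or \( b=0 \) rather than keep \( a=b=1 \), since the zero vector is not an eigenstate as the theorem statement requires; and your converse computation, routed through multiplying by \( \gamma_+ \) and the identity \( \gamma_+\gamma_-=(\Id+\gamma^1)/2 \), is a longer path to what the paper obtains in one line by multiplying \( (\gamma_+E+\gamma_-\beta)\ket{\varphi} = \sqrt{E\beta}\ket{\varphi} \) by \( \gamma^1 \) and using \( \gamma^1\gamma_\pm=\pm\gamma_\pm \). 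Your handling of \( \parity^1\ket{\varphi_2} \) by citing the earlier verification that \( \parity^1 \) preserves \( \gamma_+ \)-eigenspaces is a legitimate shortcut; the paper instead reruns the eigenvalue argument with \( -\sqrt{E\beta} \). Both are sound.
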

\begin{proof}
	For the forward direction, let \( \ket{\psi} \) be a solution to the time-independent L\'evy-Leblond equation.
	Let \( \ket{\chi} = (1/\sqrt{E\beta})\widehat{P}\ket{\psi} \).

	First, consider the case \( \ket{\chi} \neq c\ket{\psi} \) for any \( c\in\CC \).
	Set \( \ket{\varphi_1} = \ket{\psi} + \ket{\chi} \) and \( \ket{\varphi_2} = \parity^1(\ket{\psi} - \ket{\chi}) \). 
	Since \( \Hsch = (1/\beta)\widehat{P}^2 \),
	we know that \( (\widehat{P})^2\ket{\psi} = E\beta\ket{\psi} \),
	and so \( \widehat{P}\ket{\chi} = \sqrt{E\beta}\ket{\psi} \).
	Noting also that \( \widehat{P}\ket{\psi} = \sqrt{E\beta}\ket{\chi} \),
	it is easily verified that
	\[
		\widehat{P}\ket{\varphi_1} = \sqrt{E\beta}\ket{\varphi_1}
		\qquad \text{and} \qquad
		\widehat{P}\ket{\varphi_2} = \sqrt{E\beta}\ket{\varphi_2}.
	\]
	From~\eqref{eq:DifferentialConsequence},
	we know that 
	\[
		\widehat{P}\ket{\psi} = (\gamma_+ E + \gamma_- \beta)\ket{\psi} = \left(-\gamma_+ \frac{1}{\beta}(\partial_1)^2 + \gamma_- \beta\right)\ket{\psi} = D_+\ket{\psi}.
	\]
	Therefore, \( \ket{\chi} = (1/\sqrt{E\beta})D_+\ket{\psi} \) and a similar argument shows that
	\[
		D_+\ket{\varphi_1} = \sqrt{E\beta}\ket{\varphi_1}
		\qquad \text{and} \qquad
		D_+\ket{\varphi_2} = \sqrt{E\beta}\ket{\varphi_2}.
	\]
	That is, \( \ket{\varphi_1} \) and \( \ket{\varphi_2} \) are simultaneous eigenstates for \( \widehat{P} \) and \( D_+ \) with all eigenvalues positive.
	Moreover, \( \ket{\psi} = \ket{\varphi_1} + \parity^1\ket{\varphi_2} \) so we set \( a=b=1 \).

	Now, consider the case where \( \ket{\chi} = c \ket{\psi} \) for some \( c\in\CC \). Then \( \ket{\psi} \) is an eigenstate for \( \widehat{P} \). 
	From~\eqref{eq:DifferentialConsequence},
	we know that 
	\[
		\widehat{P}\ket{\psi} = (\gamma_+ E + \gamma_- \beta)\ket{\psi} = (-\gamma_+ \frac{1}{\beta}(\partial_1)^2 + \gamma_- \beta)\ket{\psi} = D_+\ket{\psi}.
	\]
	Therefore, \( \ket{\psi} \) is also an eigenstate for \( D_+ \) with the same eigenvalue.
	Since \( \Hsch = (1/\beta) (\widehat{P})^2 \), we know that \( (\widehat{P})^2\ket{\psi} = E\beta\ket{\psi} \), and so \( c^2 = 1 \).
	We have thus shown that \( \ket{\psi} \) is a simultaneous eigenstate of \( \widehat{P} \) and \( D_+ \)
	with identical non-zero eigenvalues.
	Expressing \( \ket{\psi} \) as a trivial linear combination \( \ket{\psi}=\ket{\psi} \)
	(if the eigenvalues of \( \ket{\psi} \) are positive) 
	or \( \ket{\psi} = \parity^1(\parity^1\ket{\psi}) \)
	(if the eigenvalues of \( \ket{\psi} \) are negative) proves the theorem in this case.
	More explicitly, if \( c = 1 \) then we can choose \( \ket{\varphi_1} = \ket{\psi} \) with \( a=1,\, b=0 \).
	If \( c = -1 \) then \( \parity^1\ket{\psi} \) has positive eigenvalues
	(since \( \acomm*{\widehat{P}}{\parity^1} = \acomm*{D_+}{\parity^1} = 0 \))
	so we can choose \( \ket{\varphi_2} = \parity^1\ket{\psi} \) with \( a=0,\, b=1 \).

	For the reverse direction, let \( \ket{\varphi_1} \) and \( \ket{\varphi_2} \) be simultaneous eigenstates of \( \widehat{P} \) and \( D_+ \) with only positive eigenvalues.
	Since both operators square to \( \beta\Hsch \) we have that \( \ket{\varphi_1} \) and \( \ket{\varphi_2} \) are also eigenstates of \( \Hsch \).
	Let \( E>0 \) be the corresponding eigenvalue of \( \Hsch \).
	Then the eigenvalues for \( \widehat{P} \) and \( D_+ \) must be \( \sqrt{E\beta} \)
	(note that we are assuming only positive eigenvalues).
	In particular,
	\[
		-i\partial_1 \ket{\varphi_1} = \sqrt{E\beta}\ket{\varphi_1} =  \left(-\gamma_+ \frac{1}{\beta}(\partial_1)^2 + \gamma_-\beta\right)\ket{\varphi_1} = (\gamma_+ E + \gamma_- \beta)\ket{\varphi_1}.
	\]
	Rearranging, we get
	\[
		(\gamma_-\beta - \gamma^1 i \partial_1)\ket{\varphi_1} = \gamma_+ E\ket{\varphi_1}
	\]
	so \( \ket{\varphi_1} \) is a solution to the time-independent L\'evy-Leblond equation.
	Note that the eigenvalue of \( \parity^1\ket{\varphi_2} \) with \( \widehat{P} \) and \( D_+ \) is \( -\sqrt{E\beta} \), so a similar argument shows that \( \parity^1\ket{\varphi_2} \) is a solution to the L\'evy-Leblond equation.
	Since all the operators are linear, any linear combination of \( \ket{\varphi_1} \) and \( \parity^1\ket{\varphi_2} \) will also be a solution to the L\'evy-Leblond equation.
\end{proof}

\Cref{thm:simulstates} further demonstrates that the operators of the \( \Ztzt \)-graded algebra \( \mathfrak{D} \) are fundamental to the free L\'evy-Leblond equation.

\subsection{Solving the free equation}
To conclude this section, we will use \Cref{thm:simulstates} to solve the L\'evy--Leblond equation. 
%\mitchell{The solution has already been determined in...}
This is mostly useful as an illustration; the \( (1+1) \)-dimensional L\'evy-Leblond equation is easy enough to solve without (directly) appealing to this theorem.
So far, our results have been representation independent (assuming relations~\eqref{eq:gammaMatricesFreeEquation} hold).
However, to find a solution we will choose the matrix representation~\eqref{eq:gammamatrixrep}.

We are looking for simultaneous eigenstates of \( \widehat{P}=-i\pdv{x} \) and \( D_+ = -\gamma_+(1/\beta)\pdv[2]{x} + \gamma_- \beta \) with only positive eigenvalues.
We will pre-emptively look for eigenstates with eigenvalue \( \sqrt{E\beta} \).
We know that, up to scaling, the only eigenfunction of the derivative operator with eigenvalue \( i\sqrt{E\beta} \) is the exponential function \( \exp(i\sqrt{E\beta}x) \).
Therefore, the eigenstates for \( \widehat{P} \) are of the form
\[
	\begin{pmatrix}
		C_1 e^{i\sqrt{E\beta} x}\\
		C_2 e^{i\sqrt{E\beta} x}\\
	\end{pmatrix}
\]
for some constants \( C_1,\, C_2 \in \CC \).
Now, we examine which of these eigenstates are also eigenstates of \( D_+ \):
\begin{align*}
	D_+
	\begin{pmatrix}
		C_1 e^{i\sqrt{E\beta} x}\\
		C_2 e^{i\sqrt{E\beta} x}\\
	\end{pmatrix}
	&=
	\begin{pmatrix}
		C_2 E e^{i\sqrt{E\beta} x}\\
		C_1 \beta e^{i\sqrt{E\beta} x}\\
	\end{pmatrix}
	=
	\frac{C_1\beta}{C_2}
	\begin{pmatrix}
		\frac{(C_2)^2 E}{(C_1)^2 \beta} C_1e^{i\sqrt{E\beta} x}\\
		 C_2 e^{i\sqrt{E\beta} x}\\
	\end{pmatrix}
\end{align*}
so we must have \( (C_1)^2\beta = (C_2)^2 E \).
Therefore, up to scaling, the only simultaneous eigenstate with only positive eigenvalues is
\[
	\begin{pmatrix}
		\sqrt{E}e^{i\sqrt{E\beta} x}\\
		\sqrt{\beta}e^{i\sqrt{E\beta} x}
	\end{pmatrix}.
\]
By \Cref{thm:simulstates},
the general solution for \( E>0 \) is 
\[
	a
	\begin{pmatrix}
		\sqrt{E}e^{i\sqrt{E\beta} x}\\
		\sqrt{\beta}e^{i\sqrt{E\beta} x}
	\end{pmatrix}
	+
	b
	\gamma^1\parity
	\begin{pmatrix}
		\sqrt{E}e^{i\sqrt{E\beta} x}\\
		\sqrt{\beta}e^{i\sqrt{E\beta} x}
	\end{pmatrix}
	=
	\begin{pmatrix}
		a\sqrt{E}e^{i\sqrt{E\beta} x} + b \sqrt{E}e^{-i\sqrt{E\beta} x}\\
		a\sqrt{\beta}e^{i\sqrt{E\beta} x} - b \sqrt{\beta}e^{-i\sqrt{E\beta} x}
	\end{pmatrix}
\]
for any \( a,\, b\in\CC \).
%%NOTE: the case where a\neq 0 is not an eigenvector in the rigged Hilbert space
%We can solve for the case \( E = 0 \) directly:
%\begin{align*}
%	\left(\gamma_- \beta - \gamma^1 i \pdv{x}\right)
%	\begin{pmatrix}
%		f(x)\\
%		g(x)
%	\end{pmatrix}
%	&=0\\
%	\implies
%	\begin{pmatrix}
%		-if'(x)\\
%		\beta f(x) + ig'(x)
%	\end{pmatrix}
%	&=0
%\end{align*}
%so \( f(x) = a \) and \( g(x) = -i\beta a x + b \) for some \( a,\,b\in\CC \).
%That is, the general solution for \( E = 0 \)  is
%\[
%	\begin{pmatrix}
%		a\\
%		-i\beta ax + b
%	\end{pmatrix}
%\]
Note that if we substitute \( E = 0 \), then the above is still an eigenvector.
For every \( k\in\RR \), set
\begin{equation}\label{eq:1+1freeeigenstates}
	\psi_k(x) = 
	\frac{\beta^{1/4}}{\sqrt{2\pi(k^2 + \beta)}}
	\begin{pmatrix}
		k e^{i \sqrt{\beta} kx} \\
		\sqrt{\beta} e^{i \sqrt{\beta} kx}
	\end{pmatrix}
\end{equation}
so that \( \psi_k(x) \) and \( \psi_{-k}(x) \) are the linearly independent eigenvectors corresponding to \( E = k^2 \).
(We can easily substitute the eigenfunction \( \psi_k(x) \) into \eqref{eq:1+1TimeIndependentLevyLeblond} to verify that \( \psi_k \) is indeed a solution).
Note that these eigenvectors are not square-integrable, so do not live inside the Hilbert space \( \Lp{2}(\RR)\otimes\CC^2 \).
Regardless, we find that
\begin{equation}\label{eq:1+1orthonormaleigvec}
	\begin{aligned}
		\int_{-\infty}^\infty \dd{x} \adjt{\psi_j}(x) \psi_k(x) 
		&= \frac{jk + \beta}{\sqrt{(k^2 + \beta)(j^2 + \beta)}} \frac{\sqrt{\beta}}{2\pi} \int_{-\infty}^\infty \dd{x} e^{i \sqrt{\beta} (k-j) x}\\
		&= \frac{jk + \beta}{\sqrt{(k^2 + \beta)(j^2 + \beta)}} \delta(k-j)\\
		&= \delta(k-j)
	\end{aligned}
\end{equation}
where \( \delta \) is the Dirac delta function.
Above, we used the fact that the Fourier transform of \( \delta \) is the constant function \( 1 \).

Equation~\eqref{eq:1+1orthonormaleigvec} tells us that the eigenstates \( \psi_k \) are orthonormal.
The solutions to the time-dependent L\'evy-Leblond equation~\eqref{eq:TimeDependentLevyLeblond} corresponding to these eigenvectors are
\[
	\Psi_k(t, x) = e^{-ik^2t}\psi_k(x)
	=
	\frac{\beta^{1/4}}{\sqrt{2\pi(k^2 + \beta)}}
	\begin{pmatrix}
		k e^{i \sqrt{\beta} kx - ik^2 t} \\
		\sqrt{\beta} e^{i \sqrt{\beta} kx - ik^2 t}
	\end{pmatrix}.
\]
We can find a more general solution to the time-dependent L\'evy-Leblond equation by taking a continuous linear combination:
\[
	\Psi(t,x) = \int_{-\infty}^\infty \dd{k} f(k) \Psi_k(t,x)
\]
for some function \( f\in \Lp{1}(\RR)\cap\Lp{2}(\RR) \).
Note that, although the eigenstate solutions \( \Psi_k(t,\cdot) \) are not in \( \Lp{2}(\RR)\otimes\CC^2 \) for any \( t\in\RR \),
we can choose \( f \) so that the linear combination \( \Psi(t,\cdot) \) will be in \( \Lp{2}(\RR)\otimes\CC^2 \) for each \( t\in\RR \).

We have just demonstrated how the operators of the colour Lie superalgebra \( \mathfrak{D} \) 
can be used to solve for the eigenstates~\eqref{eq:1+1freeeigenstates} of the free L\'evy-Leblond equation.
This demonstrates the fundamental nature of the \( \Ztwo[3] \)-colour Lie superalgebra \( \mathfrak{D} \) to this L\'evy-Leblond equation.

\section{Ladder operators of the harmonic oscillator}\label{sec:harmonic}
The \( \Ztzt \)-graded symmetry of the L\'evy-Leblond equation with harmonic potential has been investigated in~\cite{AKTT2016},
where the constructed symmetry algebra did not have a \( \Ztzt \)-grade (contrasting with the free equation).
Despite this, we construct a \( \Ztzt \)-graded spectrum generating algebra for the harmonic L\'evy-Leblond equation by using a different form of the L\'evy-Levlond operator:
from~\eqref{eq:TimeIndependentLevyLeblond} (and assuming the relations~\eqref{eq:gammaMatricesFreeEquation} hold), the time independent equation with potential \( V(x) = (k/2)x^2 \)
 is
\begin{equation}\label{eq:1+1TimeIndependentLevyLeblondHarmonic}
	\left(\gamma_- \beta - \gamma^1 i \partial_1 + \gamma_{+} \frac{k}{2}x^{2}\right) \ket{\psi} = \gamma_+ E \ket{\psi}
\end{equation}
where \( k \) is the spring constant.
Note that we are using natural units \( \hbar = 1 \).

The L\'evy-Leblond Hamiltonian is
\[
	\Hll = \gamma_{-} \beta - \gamma^{1} i\partial_1 + \gamma_{+} \frac{k}{2}x^{2}
\]
and the corresponding Schr\"odinger-like Hamiltonian is
\[
	\Hsch = \frac{1}{\beta}(\Hll)^2 = - \frac{1}{\beta}(\partial_1)^2 +  \frac{k}{2} x^{2} - \gamma_{+} i \frac{k}{\beta} x.
\]
From \Cref{prop:samespectrum}, the spectrum of \( \Hll \) is contained in the spectrum of the familiar harmonic Schr\"odinger Hamiltonian.
Therefore, the spectrum of \( \Hll \) is positive and discrete.

Consider a general first order differential operator
\[
	A = (c_I(x) + c_{I1}(x)\partial_1)I + (c_+(x)+c_{+1}(x)\partial_1)\gamma_+ + (c_-(x)+c_{-1}(x)\partial_1)\gamma_- + (c_1(x)+c_{11}(x)\partial_1)\gamma^1.
\]
Our strategy is to search for ladder operators, so we want
\begin{equation}\label{eq:LadderOperatorEquation}
	\Hll A \ket{\psi} = \gamma_+(E + F)A\ket{\psi}
\end{equation}
for \( E,F\in\CC \) and
for \( \ket{\psi} \) a solution to the time-independent equation~\eqref{eq:1+1TimeIndependentLevyLeblondHarmonic}.

%\begin{rmk}
%	If we follow the approach in \Cref{sec:free},
%	and wish to find some operator \( c_{\Id}(x)\Id + c_+(x)\gamma_+ + c_-(x)\gamma_- + c_1 \gamma^1 \) that leaves the \( \gamma_+ \)-eigenspaces (i.e. spaces of solutions to~\eqref{eq:1+1TimeIndependentLevyLeblondHarmonic}) invariant,
%	then we obtain the following ODE
%	\begin{equation}
%		\left\{
%			\begin{aligned}
%				\dv{c_{I}}{x} &= 0\\
%				\dv{c_{+}}{x} &= (- 2 iE + ik x^{2})c_{1}{\left(x \right)}\\
%				\dv{c_{-}}{x} &= 2 i \beta c_{1}{\left(x \right)}\\
%				\dv{c_{1}}{x} &= iE c_{-}{\left(x \right)} - i\beta c_{+}{\left(x \right)} - \frac{ik x^{2} c_{-}{\left(x \right)}}{2}.
%			\end{aligned}
%		\right.
%	\end{equation}
%	This ODE is more difficult to solve than the corresponding ODE~\eqref{eq:1+1ODE} in \Cref{sec:free}.
%	For this reason, we restrict our attention to first-order differential equations.
%	Moreover, first-order differential ladder operators will help us solve the L\'evy-Leblond equation.
%\end{rmk}

If \( \comm{\Hll - \gamma_+E}{A} = \gamma_+ FA \) then \( A \) will satisfy \eqref{eq:LadderOperatorEquation}.
Indeed,
\[
	(\Hll - \gamma_+E)A\ket{\psi}
	= A(\Hll - \gamma_+E)\ket{\psi} + \comm{\Hll - \gamma_+E}{A}\ket{\psi}
	= \gamma_+ FA\ket{\psi}.
\]
Similarly, if \( \acomm{\Hll - \gamma_+E}{A} = \gamma_+ FA \) then \( A \) will also satisfy \eqref{eq:LadderOperatorEquation}.

\begin{rmk}
	Recall that, for the Schr\"odinger Hamiltonian \( H \), if \( \comm{H}{A} = FA \) then \( A \) is a Ladder operator.
	For the L\'evy-Leblond Hamiltonian \( \Hll \), we also need to consider the \( \gamma_+E \) term
	because we don't necessarily have \( \comm{\gamma_+E}{A} = 0 \).
\end{rmk}

In the case that \( \comm{\Hll - \gamma_+E}{A} = \gamma_+ FA \), we obtain the following ODE,
\begin{equation}
	\left\{
		\begin{aligned}
			\dv{c_{1}}{x} &= \frac{i \left(F c_{-}{\left(x \right)} + k x c_{-1}{\left(x \right)}\right)}{2}\\
			\dv{c_{11}}{x} &= \frac{i F c_{-1}{\left(x \right)}}{2}\\
			\dv{c_{I}}{x} &= \frac{i \left(2 E c_{-}{\left(x \right)} + F c_{-}{\left(x \right)} + 2 \beta c_{+}{\left(x \right)} - k x^{2} c_{-}{\left(x \right)} - k x c_{-1}{\left(x \right)}\right)}{2}\\
			\dv{c_{I1}}{x} &= i \left(E c_{-1}{\left(x \right)} + \frac{F c_{-1}{\left(x \right)}}{2} + \beta c_{+1}{\left(x \right)} - \frac{k x^{2} c_{-1}{\left(x \right)}}{2}\right)\\
			\dv{c_{+}}{x} &= i \left(- 2 E c_{1}{\left(x \right)} - F c_{1}{\left(x \right)} + F c_{I}{\left(x \right)} + k x^{2} c_{1}{\left(x \right)} + k x c_{11}{\left(x \right)} + k x c_{I1}{\left(x \right)}\right)\\
			\dv{c_{+1}}{x} &= - 2 i E c_{11}{\left(x \right)} - i F c_{11}{\left(x \right)} + i F c_{I1}{\left(x \right)} + i k x^{2} c_{11}{\left(x \right)} - 2 c_{+}{\left(x \right)}\\
			\dv{c_{-}}{x} &= 2 i \beta c_{1}{\left(x \right)}\\
			\dv{c_{-1}}{x} &= 2 i \beta c_{11}{\left(x \right)} - 2 c_{-}{\left(x \right)}\\
			c_{+1}{\left(x \right)} &= 0\\
			c_{-1}{\left(x \right)} &= 0
		\end{aligned}
	\right.
\end{equation}
while in the case that \( \acomm{\Hll - \gamma_+E}{A} = \gamma_+ FA \),
we obtain 
\begin{equation}
	\left\{
		\begin{aligned}
			\dv{c_{1}}{x} &= \frac{i \left(2 E c_{-}{\left(x \right)} + F c_{-}{\left(x \right)} - 2 \beta c_{+}{\left(x \right)} - k x^{2} c_{-}{\left(x \right)} - k x c_{-1}{\left(x \right)}\right)}{2}\\
			\dv{c_{11}}{x} &= i E c_{-1}{\left(x \right)} + \frac{i F c_{-1}{\left(x \right)}}{2} - i \beta c_{+1}{\left(x \right)} - \frac{i k x^{2} c_{-1}{\left(x \right)}}{2} - 2 c_{1}{\left(x \right)}\\
			\dv{c_{I}}{x} &= \frac{i \left(F c_{-}{\left(x \right)} + k x c_{-1}{\left(x \right)}\right)}{2}\\
			\dv{c_{I1}}{x} &= \frac{i F c_{-1}{\left(x \right)}}{2} - 2 c_{I}{\left(x \right)}\\
			\dv{c_{+}}{x} &= i \left(2 E c_{I}{\left(x \right)} - F c_{1}{\left(x \right)} + F c_{I}{\left(x \right)} - k x^{2} c_{I}{\left(x \right)} - k x c_{11}{\left(x \right)} - k x c_{I1}{\left(x \right)}\right)\\
			\dv{c_{+1}}{x} &= i \left(2 E c_{I1}{\left(x \right)} - F c_{11}{\left(x \right)} + F c_{I1}{\left(x \right)} - k x^{2} c_{I1}{\left(x \right)}\right)\\
			\dv{c_{-}}{x} &= 2 i \beta c_{I}{\left(x \right)}\\
			\dv{c_{-1}}{x} &= 2 i \beta c_{I1}{\left(x \right)}\\
			c_{11}{\left(x \right)} &= 0\\
			c_{I1}{\left(x \right)} &= 0
		\end{aligned}
	\right.
\end{equation}

Solving these two equations, we obtain the following operators and (anti)commutation relations,
\begin{align*}
	\Hll &=  \gamma^{1} (-i\partial_1) +  \gamma_{-} \beta + \gamma_{+} \frac{k}{2}x^{2} 
	     & \comm{\Hll}{\Hll} &= 0\\
	b &= \sqrt{\frac{\beta k}{2}}x + \partial_1 - \gamma_{+}i \sqrt{\frac{k}{2\beta}}
	  &\comm{\Hll}{b} &= -\sqrt{\frac{2k}{\beta}}\gamma_+ b\\
	b^{\dagger} &= \sqrt{\frac{\beta k}{2}}x - \partial_1 - \gamma_{+}i \sqrt{\frac{k}{2\beta}}
		    &\comm{\Hll}{b^{\dagger}} &= \sqrt{\frac{2k}{\beta}}\gamma_+ b^{\dagger}\\
	c &= \gamma_{+} 2i\partial_1 + \gamma^{1} \beta = \beta \Id - 2\gamma_+\Hll
	  & \acomm{\Hll}{c} &= 0
\end{align*}
From the above relations,
we obtain two additional operators \( \gamma_+b \) and \( \gamma_+b^{\dagger} \).
We can easily compute the anticommutation relations for \( \gamma_+b \) and \( \gamma_+b^{\dagger} \) by using the derivation property
\begin{align*}
	\acomm{\Hll}{\gamma_+ b} &= \acomm{\Hll}{\gamma_+}b - \gamma_+\comm{\Hll}{b} = \beta b\\
	\acomm{\Hll}{\gamma_+ b^{\dagger}} &= \acomm{\Hll}{\gamma_+}b^{\dagger} - \gamma_+\comm{\Hll}{b^{\dagger}} = \beta b^{\dagger}.
\end{align*}

Note that the operators \( b,\,b^{\dagger},\,\gamma_+b \) and \( \gamma_+b^{\dagger} \) are all ladder operators for \( \Hsch \):
%Indeed,
%\begin{align*}
%	\comm{\Hsch}{\gamma_+b} 
%	&= \comm{\frac{1}{2\beta}\acomm{\Hll}{\Hll}}{\gamma_+b}\\
%	&= \frac{1}{2\beta}\left(\comm{\Hll}{\acomm{\Hll}{\gamma_+b}} + \comm{\Hll}{\acomm{\Hll}{\gamma_+b}}\right)
%	&&\text{Colour Jacobi Identity}\\
%	&= \comm{\Hll}{b}\\
%	&= -\sqrt{\frac{2k}{\beta}}\gamma_+b
%\end{align*}
%and similarly 
\begin{align*}
	\comm{\Hsch}{b} &= -\sqrt{\frac{2k}{\beta}}b & \comm{\Hsch}{\gamma_+b^{\dagger}} &= \sqrt{\frac{2k}{\beta}} b^{\dagger}\\
	\comm{\Hsch}{\gamma_+b} &= -\sqrt{\frac{2k}{\beta}}\gamma_+ b & \comm{\Hsch}{\gamma_+b^{\dagger}} &= \sqrt{\frac{2k}{\beta}}\gamma_+ b^{\dagger}.
\end{align*}
.
However, \( \gamma_+b \) and \( \gamma_+b^{\dagger} \) are \emph{not} ladder operators for \( \Hll \).
%(indeed, \( \Hll \gamma_+ b^{\dagger}\ket{\psi} = \beta b^{\dagger}\ket{\psi} \neq \gamma_+(E+F) \gamma_+b^{\dagger}\ket{\psi} = 0 \) and similarly for \( b \)).
This emphasises a key distinction between the L\'evy-Leblond equation and the Schr\"odinger equation:
some solutions of the Schr\"odinger equation will not be solutions of the L\'evy-Leblond equation.

The other (anti)commutation relations between these operators that are of interest to us are
\begin{align*}
	\acomm{\Hll}{\Hll} = \acomm{b}{b^{\dagger}} &= 2\beta\Hsch\\
	\acomm{\gamma_+}{\Hll} &= \beta\Id\\
	\comm{\gamma_+}{b} =  \comm{\gamma_+}{b^{\dagger}} = \acomm{\gamma_+}{c} &= 0\\
	\comm{b}{c} = \comm{b^{\dagger}}{c} = \acomm{c}{\gamma_+ b} = \acomm{c}{\gamma_+ b^{\dagger}} &= 0\\
	\acomm{c}{c} &= 2\beta^{2}\Id\\
	\acomm{b}{\gamma_+b} &= 2\gamma_+b^2\\
	\acomm{b^{\dagger}}{\gamma_+b^{\dagger}} &= 2\gamma_+(b^{\dagger})^2\\
	\acomm{b}{\gamma_+b^{\dagger}} = \acomm{b^{\dagger}}{\gamma_+b} &= 2\gamma_+\beta\Hsch\\
	\comm{\gamma_+b}{\gamma_+b^{\dagger}} &= 0
\end{align*}
as well as \( \comm{b}{b^{\dagger}} = \sqrt{2\beta k} \Id \).

If we want to include \( \acomm{b}{b^{\dagger}} = 2\beta\Hsch \),
then we are unable to realise the (anti)commutation relations using only a \( \Ztwo \)-graded Lie superalgebra,
since we also need \( \comm{\Hll}{b}\), \( \comm{\Hll}{b^{\dagger}} \) but \( \acomm{\Hll}{\gamma_+b} \), \( \acomm{\Hll}{\gamma_+b^{\dagger}} \).
These (anti)commutation relations instead require (at least) a \( \Ztzt \)-graded colour algebra.
To this end, define \( \mathfrak{L} = \mathfrak{L}_{00}\oplus\mathfrak{L}_{01}\oplus\mathfrak{L}_{10}\oplus\mathfrak{L}_{11} \) by
\begin{align*}
	\mathfrak{L}_{00} &= \spn\{\Id,\Hsch, b^2, (b^{\dagger})^2\} &
	\mathfrak{L}_{01} &= \spn\{b,b^{\dagger}\}\\
	\mathfrak{L}_{10} &= \spn\{\Hll, c, \gamma_+, \gamma_+\Hsch, \gamma_+b^2, \gamma_+(b^{\dagger})^2\} &
	\mathfrak{L}_{11} &= \spn\{\gamma_+b,\gamma_+b^{\dagger}\}.
\end{align*}
We can easily verify that \( \mathfrak{L} \) closes to form a \( \Ztzt \)-graded colour Lie superalgebra,
and realises the (anti)commutation relations listed above (besides, of course, \( \comm{b}{b^{\dagger}} = \sqrt{2\beta k}\Id \)).

However, if we instead include the relation for \( \comm{b}{b^{\dagger}} \) we can then realise the (anti)commutation relations as a (\( \Ztwo \)-graded) Lie superalgebra
with even sector spanned by \( \Hsch,b,b^{\dagger},\Id \) and odd sector spanned by \( \Hll, \gamma_+b, \gamma_+b^{\dagger} \).
To summarise: including the relation for \( \comm{b}{b^{\dagger}} \) requires a \( \Ztwo \)-graded Lie superalgebra, whereas including the relation \( \acomm{b}{b^{\dagger}} \) requires a \( \Ztzt \)-graded colour Lie superalgebra.
This is analogous to the situation for the harmonic Schr\"odinger equation:
we can realise the relation \( \comm{b}{b^{\dagger}} \) using an ungraded Lie algebra, but we require a \( \Ztwo \)-graded Lie superalgebra to realise the relation \( \acomm{b}{b^{\dagger}} \).
Both of these relations are useful for determining the \( \gamma_+ \)-eigenvalues and \( \gamma_+ \)-eigenstates, as in the following theorem.

\begin{thm}
	For the harmonic L\'evy-Leblond Hamiltonian \( \Hll \),
	the \( \gamma_+ \)-eigenvalues are \( \omega(n + 1/2)\)
	where \( \omega = \sqrt{2k/\beta} \) is the angular frequency
	and \( n \) is a non-negative integer.
	Moreover, the corresponding eigenstates are
	\[
		(b^{\dagger})^n\vac
	\]
	where \( \vac \) is a vacuum state satisfying \( b\vac = 0 \) and \( \Hll\vac = (1/2)\omega\vac \).
\end{thm}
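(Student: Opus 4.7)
My plan is to mimic the familiar Fock-space construction for the Schr\"odinger harmonic oscillator, using $b^{\dagger}$ and $b$ as raising and lowering operators for $\gamma_+$-eigenvalues of $\Hll$. The target eigenvalues $\omega(n+1/2)$ are exactly those of the ordinary harmonic Schr\"odinger Hamiltonian, and by \Cref{prop:samespectrum} the $\gamma_+$-spectrum of $\Hll$ is contained in this discrete set, so it suffices to exhibit a $\gamma_+$-eigenstate at each level and invoke the ladder structure to connect them.

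First I would use the commutation relation $\comm{\Hll}{b^{\dagger}} = \omega\gamma_+ b^{\dagger}$ (with $\omega=\sqrt{2k/\beta}$) together with $\comm{\gamma_+}{b^{\dagger}}=0$ to verify that if $\Hll\ket{\psi}=\gamma_+E\ket{\psi}$ then $\Hll(b^{\dagger}\ket{\psi})=b^{\dagger}\gamma_+E\ket{\psi}+\omega\gamma_+b^{\dagger}\ket{\psi}=\gamma_+(E+\omega)(b^{\dagger}\ket{\psi})$, and analogously that $b$ lowers the $\gamma_+$-eigenvalue by $\omega$. Next I would construct $\vac$ by solving $b\vac=0$, a first-order linear system most easily handled in the matrix representation~\eqref{eq:gammamatrixrep}, and single out the element of the solution space that is also a $\gamma_+$-eigenstate with $\gamma_+$-eigenvalue $\omega/2$. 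Iterating $b^{\dagger}$ then produces $\gamma_+$-eigenstates $(b^{\dagger})^n\vac$ at eigenvalue $\omega(n+1/2)$; non-vanishing of these states follows from the standard argument using $\comm{b}{b^{\dagger}}=\beta\omega\Id$, namely the inductive identity $b(b^{\dagger})^n\vac = n\beta\omega(b^{\dagger})^{n-1}\vac$. Combining this with the containment from \Cref{prop:samespectrum} forces the $\gamma_+$-spectrum to be exactly $\{\omega(n+1/2):n\in\ZZ_{\geq 0}\}$ and identifies the eigenstates as claimed.

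The main obstacle is reconciling the two conditions on $\vac$. The annihilation condition $b\vac=0$ alone gives a two-parameter family (one component decouples as a Gaussian while the other satisfies an inhomogeneous first-order equation with a free constant of integration), while being a $\gamma_+$-eigenstate with $\gamma_+$-eigenvalue $\omega/2$ is a genuinely stronger requirement. The algebraic identities $bb^{\dagger}\vac=\beta\omega\vac$ (from the commutator) and $\acomm{b}{b^{\dagger}}\vac=2\beta\Hsch\vac$ together force $\Hsch\vac=(\omega/2)\vac$, but this only pins down $\Hll^2\vac$ rather than $\Hll\vac$ itself, since $\gamma_+(\omega/2)$ is nilpotent and thus $(\gamma_+(\omega/2))^2$ vanishes while $\Hll^2\vac$ does not. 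I expect the cleanest resolution to be a direct computation in the representation~\eqref{eq:gammamatrixrep}: this fixes the additional linear relation between the two components of $\vac$ needed to kill the piece of $\Hll\vac$ not proportional to $\gamma_+\vac$, thereby confirming $\Hll\vac=\gamma_+(\omega/2)\vac$ and completing the proof.
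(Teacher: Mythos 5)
Your proposal is correct, and its ladder half (raising via \( \comm{\Hll}{b^{\dagger}} = \omega\gamma_+b^{\dagger} \) with \( \comm{\gamma_+}{b^{\dagger}}=0 \), containment via \Cref{prop:samespectrum}, completeness by lowering to a vacuum) coincides with the paper's argument; where you genuinely diverge is the construction of the vacuum \( \vac \). The paper proceeds representation-independently: it takes a vacuum eigenstate \( \vacsch \) of \( \Hsch \) (computing \( \Hsch\vacsch = (\omega/2)\vacsch \) exactly as you do) and sets \( \vac = (\beta\Id + c)\vacsch = 2\Hll\gamma_+\vacsch \), exploiting that \( \beta\Id + c \) commutes with \( \Hsch \) and is nearly idempotent, \( (\beta\Id+c)^2 = 2\beta(\beta\Id+c) \), so that \( \Hll\vac = \tfrac{1}{\beta}\Hll\Hll\gamma_+(\beta\Id+c)\vacsch = \gamma_+(\omega/2)\vac \); nonvanishing is secured by a case analysis (if \( \gamma_+\vacsch = 0 \), replace \( \vacsch \) by \( \Hll\vacsch \), which is again a vacuum with \( \gamma_+\Hll\vacsch = \beta\vacsch \neq 0 \)) combined with the observation that \( 0 \) is not in the \( \gamma_+ \)-spectrum. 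Your route instead solves \( b\vac = 0 \) explicitly in the representation~\eqref{eq:gammamatrixrep} and imposes the eigenvalue equation, and this does close as you anticipated: the lower component forces \( \psi_2 \propto e^{-\sqrt{\beta k/2}\,x^2/2} \), the second row of \( \Hll\vac = \gamma_+(\omega/2)\vac \) forces \( \psi_1 = -(i/\beta)\psi_2' \) (killing your free constant of integration), and the first row is then satisfied identically, so the vacuum exists and is unique up to scale. As for what each approach buys: the paper's is representation-independent and is deliberately phrased in terms of elements of the colour Lie superalgebra \( \mathfrak{L} \) (namely \( c \), \( \gamma_+ \), \( \Hll \)), which serves its thesis that \( \mathfrak{L} \) has computational utility; yours is more elementary and concrete, yields uniqueness of the vacuum as a by-product, and your commutator induction \( b(b^{\dagger})^n\vac = n\beta\omega(b^{\dagger})^{n-1}\vac \) makes the nonvanishing of the excited states explicit---a point the paper passes over in silence, and one where the algebraic argument is genuinely needed, since \( b^{\dagger} \) is not the Hilbert-space adjoint of \( b \) here (its true adjoint carries \( \gamma_- \) rather than \( \gamma_+ \)), so the usual norm argument is unavailable.
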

\begin{proof}
From~\Cref{prop:samespectrum},
the \( \gamma_+ \)-spectrum of \( \Hll \) is a subset of the spectrum of the corresponding Schr\"odinger Hamiltonian.
We wish to show that every eigenvalue of \( \Hsch \) is a \( \gamma_+ \)-eigenvalue of \( \Hll \).

We know that the eigenvalues of \( \Hsch \) are positive and discrete.
Thus, there exists some eigenvector \( \vacsch \) for \( \Hsch \) with \( b\vacsch = 0 \) (that is, \( \vacsch \) is a vacuum state for \( \Hsch \)).
Then,
\begin{align*}
	\Hsch\vacsch &= \frac{1}{2\beta}\acomm{b}{b^{\dagger}}\vacsch \\
		     &= \frac{1}{2\beta}(bb^{\dagger} + b^{\dagger}b)\vacsch \\
		     &= \frac{1}{2\beta}(2 b^{\dagger}b + \comm{b}{b^{\dagger}})\vacsch \\
		     &= \frac{1}{2}\sqrt{\frac{2k}{\beta}}\vacsch\\
		     &= \frac{1}{2}\omega\vacsch
\end{align*}
That is, \( (1/2)\omega \)  will be an eigenvalue of \( \Hsch \). 

Now, consider the state \( \vac = (\beta \Id + c)\vacsch \).
We choose this state because the operator \( \beta \Id + c \) has useful properties.
For one, it is nearly idempotent:
\[
	(\beta \Id + c)^2 = \beta^2\Id + 2\beta c + c^2 = 2\beta(\beta \Id + c)
\]
(recalling that \( c^2 = \beta^2\Id \)).
Moreover, by examining the (anti)commutation relations, \( \beta\Id + c \) commutes with \( \Hsch \).
We have chosen to write this operator in terms of the elements of the colour Lie superalgebra,
but we can simplify this expression:
\[
	\beta\Id + c = 2\beta I - 2\gamma_+\Hll = 2\beta I + 2\Hll\gamma_+ - 2\acomm{\gamma_+}{\Hll} = 2\Hll\gamma_+.
\]
Therefore,
\begin{align*}
	\Hll\vac &= \Hll \frac{1}{2\beta}(\beta\Id + c)^2\vacsch\\
		 &= \frac{1}{\beta}\Hll\Hll\gamma_+(\beta\Id + c)\vacsch\\
		 &= \Hsch \gamma_+(\beta\Id + c)\vacsch\\
		 &=  \gamma_+(\beta\Id + c)\Hsch\vacsch\\
		 &=  \gamma_+(\beta\Id + c)\frac{1}{2}\omega\vacsch\\
		 &= \gamma_+\frac{1}{2}\omega\vac
\end{align*}
That is, provided \( \vac\neq 0 \), we have shown that \( \vac \) is a vacuum state for \( \Hll \) with \( \gamma_+ \)-eigenvalue \( (1/2)\omega \).

Now we show that we can choose \( \vacsch \) so that \( \vac\neq 0 \).
In the case that \( \gamma_+\vacsch = 0 \), instead choose \( \vacschp = \Hll\vacsch  \) to be the vacuum state for \( \Hsch \).
Then,
\[
	b\vacschp = \Hll b\vacsch + \comm{b}{\Hll}\vacsch = 0
\]
so \( \vacschp \) is indeed a vacuum state.
Moreover,
\[
	\gamma_+\vacschp = -\Hll\gamma_+\vacsch + \acomm{\gamma_+}{\Hll}\vacsch = \beta\vacsch \neq 0.
\]

That is, without loss of generality, we can assume that \( \gamma_+\vacsch \neq 0 \).
Since we are assuming that \( \gamma_+\vacsch \neq 0 \), we must also have \( \vac = 2\Hll\gamma_+\vacsch\neq 0 \),
for otherwise \( \gamma_+\vacsch \) would be a \( \gamma_+ \)-eigenvector for \( \Hll \) with \( \gamma_+ \)-eigenvalue \( 0 \),
but \( 0 \) is not in the \( \gamma_+ \)-spectrum for \( \Hll \).

We have thus shown that \( \ket{0} \) is an vacuum state with \( \gamma_+ \)-eigenvalue \( (1/2)\omega \).
Using the commutation relation \( \comm{\Hll}{b^{\dagger}} = \omega\gamma_+b^{\dagger} \), we obtain the eigenstates \( (b^{\dagger})^n\vac \) with \( \gamma_+ \)-eigenvalues \( \omega(n + 1/2) \).
Moreover, all eigenstates can be obtained in this way: taking an arbitrary eigenstate, we can apply the lowering operator \( b \) until a vacuum state is reached and then use \( b^{\dagger} \) to raise back up to this arbitrary eigenstate.
\end{proof}

Of particular note is the appearance of the operators from the colour Lie superalgebra \( \mathfrak{L} \) in the above proof.
This demonstrates the utility of this colour Lie superalgebra in analysing the L\'evy-Leblond equation.
However, unlike the colour Lie superalgebras of~\cite{AKTT2016, AKTT2017},
\( \mathfrak{L} \) contains operators that are not symmetry operators.
In summary, despite the super Schr\"odinger symmetry algebra for the harmonic potential not admitting a \( \Ztzt \)-graded extension~\cite{AKTT2016},
there does still exist a (non-symmetry) algebra \( \mathfrak{L} \) that is useful for computing the spectrum.
%\footnote{However, recall from \Cref{sec:potential} that the version of the harmonic L\'evy-Leblond equation in the present paper is slightly different from that in~\cite{AKTT2016}.}.

\section{Conclusions}
We have extended the results of~\cite{AKTT2016,AKTT2017} by showing that higher gradings (in particular, a \( \Ztwo[3] \)-grading) appear for the free L\'evy-Leblond equation
and that the harmonic potential also admits a \( \Ztzt \)-graded colour Lie superalgebra.

In~\cite{AKTT2016,AKTT2017}, the authors only searched for first- and second-order differential operators.
In this paper, we did not restrict ourselves to differential operators,
and searched for any operators which left the \( \gamma_+ \)-eigenspaces of the free L\'evy-Leblond equation invariant.
By carefully examining eigenvalue constraints,
we found operators that closed to form the \( \Ztwo[3] \)-graded colour Lie superalgebra \( \mathfrak{D} \).
Of particular note is the non-differential operator \( \gamma^1\parity\in\mathfrak{D} \) where \( \parity \) is the parity operator.
This inclusion of a non-differential operator forced the grading to be \( \Ztwo[3] \),
and it is likely that broader symmetry algebras will require \( \Ztwo[3] \) or even higher \( \Ztwo[n] \)-gradings (\( n>3 \)).
We then used \( \mathfrak{D} \) to solve the L\'evy-Leblond equation,
demonstrating the fundamental nature of the \( \Ztwo[3] \)-graded colour Lie superalgebra \( \mathfrak{D} \) to the L\'evy-Leblond equation.

The super Schr\"odinger symmetry algebra for the harmonic L\'evy-Leblond equation in \( (1+1) \)-dimensions admits no \( \Ztzt \)-graded extension~\cite{AKTT2016}.
Instead, we considered non-symmetry operators.
In particular,
we discovered a \( \Ztzt \)-graded colour Lie superalgebra
\( \mathfrak{L} \) that is generated by a Hamiltonian-like operator with the ladder operators.
Despite \( \mathfrak{L} \) containing non-symmetry operators,
the algebra has utility:
we used the operators from \( \mathfrak{L} \) to compute the spectrum of the harmonic L\'evy-Leblond equation.
To the author's knowledge, \( \mathfrak{L} \) is the first colour Lie superalgebra that has been found for a L\'evy-Leblond equation with non-vanishing potential.
%This is despite the result that the Schr\"odinger symmetry algebra has no \( \Ztzt \)-graded extension in the presence of a quadratic potential~\cite{AKTT2016}.

These results further highlight the utility of \( \Ztzt \)-graded colour Lie superalgebras in examining the L\'evy-Leblond equation, and potentially other physical systems.

%We end with some further directions.
%For simplicity, we have only considered the L\'evy-Leblond equation in \( (1+1) \)-dimensions.
%It would be natural investigate generalising and extending these results to the L\'evy-Leblond in \( (1+d) \)-dimensions.
%We have so far focussed on the L\'evy-Leblond equation with only free and harmonic potentials.
%A more systematic investigation of different potentials could be illuminating.
%The L\'evy-Leblond equation is one of the few pre-existing physical equations where colour Lie superalgebras have appeared.
%It is a challenging open problem to find colour Lie superalgebras in other physical systems.
%Given its close relationship with the L\'evy-Leblond equation,
%the Dirac equation is an obvious place to look for such colour Lie superalgebra symmetries.
%
\subsection*{Acknowledgements}
This work was partially supported by the Australian Research Council through Discovery Project DP200101339.
The author is especially grateful to Phillip Isaac for proofreading the manuscript and for his continuing support and guidance.
The author would like to thank the anonymous reviewers for their helpful comments.

\appendix
\section{Non-relativistic limit of the Dirac equation}\label{sec:NonrelativisticLimit}
In this appendix, we show how the L\'evy-Leblond equation (corresponding to the Schr\"odinger equation) can be obtained as the non-relativistic limit of the Dirac equation.
This generalises the original argument of L\'evy-Leblond~\cite{LevyLeblond1967} and provides a realisation of the L\'evy-Leblond gamma matrices in \( \Cl{1}{d}(\RR)\otimes\CC \) when \( d \) is odd.

Recall that the Dirac equation is
\[
	(\widetilde{\gamma}^0 (1/c)i\hbar\partial_t +  \widetilde{\gamma}^j i\hbar\partial_j - mc)\Psi(t,\vec{x}) = 0
\]
and is a relativistic equation describing a free spin-\( 1/2 \) particle (see e.g.~\cite[Chapter~10]{Lounesto2001}). 
The gamma matrices appearing in the Dirac equation satisfy \( \acomm{\widetilde{\gamma}^j}{\widetilde{\gamma}^k}= 2\eta_{jk},\, (j,k=0,\ldots,d) \),
where \( \eta_{jk} \) is a symmetric bilinear form with signature \( (1,d) \):
\[
	\eta_{jk} =
	\begin{cases}
		\delta_{jk} & \text{if}~j=0,\\
		-\delta_{jk} & \text{otherwise}.
	\end{cases}
\]
The gamma matrices for the Dirac equation are members of the complexified Clifford algebra \( \Cl{1}{d}(\RR)\otimes\CC \) .
From now on, we will work in natural units, \( c = \hbar = 1 \).

We can also write a time-independent version of the Dirac equation,
by replacing \( i\partial_t \) with the total energy \( \mathscr{E} \) (\( mc^2 + \) kinetic energy):
\begin{equation}\label{eq:timeindependentdirac}
	(\widetilde{\gamma}^0\mathscr{E} + \widetilde{\gamma}^j i\partial_j - m)\Psi(\vec{x}) = 0.
\end{equation}

Let us recall the original limit argument of L\'evy-Leblond~\cite{LevyLeblond1967}.
For the moment, we shall restrict ourselves to \( (1+3) \)-dimensions and choose
the Dirac representation of the gamma matrices in \( 4 \times 4 \) matrices:
\[
	\widetilde{\gamma}^0 = 
	\begin{pmatrix}
		\Id & 0 \\
		0 & -\Id
	\end{pmatrix},
	\qquad
	\widetilde{\gamma}^j =
	\begin{pmatrix}
		0 & \sigma^j\\
		-\sigma^j & 0
	\end{pmatrix}
\]
where \( \sigma^j \) are the Pauli matrices and \( \Id \) is the identity.
In this representation, the Dirac equation becomes a coupled differential equation
\begin{equation}\label{eq:coupleddirac}
	\left\{
	\begin{aligned}
		(\mathscr{E} - m) \varphi + \sigma^ji\partial_j \chi &= 0 \\
		(-\mathscr{E} - m) \chi - \sigma^ji\partial_j \varphi &= 0 \\
	\end{aligned}
	\right.
\end{equation}
where
\( \Psi = 
\begin{pmatrix}
	\varphi\\
	\chi
\end{pmatrix}
\) is a solution to the Dirac equation.
L\'evy-Leblond discovered \cite{LevyLeblond1967} that taking \( \mathscr{E} = E + m \) (where \( E \) is the non-relativistic kinetic energy), and assuming \( E \ll m \), we can approximate~\eqref{eq:coupleddirac} by
\begin{equation}\label{eq:diraclimit}
	\left\{
	\begin{aligned}
		E \varphi + \sigma^ji\partial_j \chi &= 0 \\
		-2m \chi - \sigma^ji\partial_j \varphi &= 0. \\
	\end{aligned}
	\right.
\end{equation}
If \( \varphi \) and \( \chi \) solve~\eqref{eq:diraclimit},
then \( \Psi = 
\begin{pmatrix}
	\chi \\
	\varphi
\end{pmatrix}
\) is a solution to the L\'evy-Leblond equation, with
\[
	\gamma_+ = 
	\begin{pmatrix}
		0 & \Id\\
		0  & 0
	\end{pmatrix},
	\qquad
	\gamma_- = 
	\begin{pmatrix}
		0 & 0\\
		\Id & 0
	\end{pmatrix},
	\qquad
	\gamma^j = 
	\begin{pmatrix}
		\sigma^j & 0\\
		0 & -\sigma^j
	\end{pmatrix}
\]
(notice that the components \( \varphi \) and \( \chi \) have swapped places when compared to the solution of the Dirac equation).
That is, the L\'evy-Leblond equation is the non-relativistic limit of the Dirac equation.

We can generalise the above argument of L\'evy-Leblond for all odd space dimensions in a representation independent way.
Consider the time-independent Dirac equation in \( (1+d) \)-dimensions, \eqref{eq:timeindependentdirac} and assume \( d \) is odd.
Let 
\[
	\chiralg = i^{(d+3)d/2}\prod_{j=0}^d \widetilde{\gamma}^j.
\]
Since the \( \widetilde{\gamma}^j \) anticommute,
we have that \( \widetilde{\gamma}^j\chiralg = (-1)^d \chiralg \widetilde{\gamma}^j,\, (j=0,\ldots,d) \).
Assuming that \( d \) is odd, we have that \( \acomm{\widetilde{\gamma}^j}{\chiralg} = 0 \).
Additionally,
\[
	\chiralg = i^{(d+3)d/2}(-1)^{d+ (d-1) + \cdots + 1} \prod_{j=0}^d \widetilde{\gamma}^{d-j} = i^{(d+3)d/2}(-1)^{(d+1)d/2} \prod_{j=0}^d \widetilde{\gamma}^{d-j}.
\]
Therefore, \( (\chiralg)^2 = (i^2)^{(d+3)d/2}(-1)^{(d+1)d/2 + d} \Id = \Id \).
In particular, \( \chiralg \) is invertible so \( \Psi \) is a solution to the time-independent Dirac equation if and only if
\begin{equation}\label{eq:chiralDirac}
	\chiralg(\widetilde{\gamma}^0\mathscr{E} + \widetilde{\gamma}^j i \partial_j - m)\Psi(\vec{x}) = 0.
\end{equation}
(The above multiplication by \( \chiralg \) corresponds to the swapping of components \( \varphi \) and \( \chi \) in L\'evy-Leblond's original argument above.)
Take the non-relativistic limit by setting \( \mathscr{E} = E + m \) and assuming \( E \ll m \).
We can compute
\begin{align}
	0 &= \chiralg(\widetilde{\gamma}^0\mathscr{E} + \widetilde{\gamma}^j i \partial_j - m)\Psi(\vec{x}) \nonumber\\
	  &= (\chiralg\widetilde{\gamma}^0(E+m) + \chiralg\widetilde{\gamma}^j i \partial_j - \chiralg m)\Psi(\vec{x}) \nonumber\\
	  &= \left(\frac{1}{2}(\chiralg + \chiralg\widetilde{\gamma}^0)E - \frac{1}{2}(\chiralg - \chiralg\widetilde{\gamma}^0)(E+2m) + \chiralg\widetilde{\gamma}^j i \partial_j\right)\Psi(\vec{x}) \nonumber\\
	  &\approx (\gamma_+ E - \gamma_- 2m + \gamma^j i \partial_j)\Psi(\vec{x}) \label{eq:derivedTimeIndependentLevyLeblond}
\end{align}
where
\begin{equation}\label{eq:LevyLeblondGammaFromDirac}
	\gamma_{\pm} = \frac{1}{2}(\chiralg \pm \chiralg\widetilde{\gamma}^0),
	\qquad
	\gamma^j = \chiralg \widetilde{\gamma}^j,\, (j=1,\ldots,d).
\end{equation}
It is easily verified that \( \gamma_{\pm},\, \gamma^j \) satisfy the anti-commutation relations~\eqref{eq:LevyLeblondAR}
by using the fact that \( \acomm{\widetilde{\gamma}^j}{\chiralg} = 0 \) (when \( d \) is odd).
We can now use the time independent L\'evy-Leblond equation~\eqref{eq:derivedTimeIndependentLevyLeblond} to derive a time-dependent L\'evy-Leblond equation by replacing \( E \) with \( i\partial_t \):
\[
	(\gamma_+\partial_t  + \gamma_- i2m + \gamma^j \partial_j)\Psi(t,\vec{x}) = 0
\]
which has the desired form~\eqref{eq:generalLevyLeblond}.

It is important to note that the realisation in~\eqref{eq:LevyLeblondGammaFromDirac} satisfies relations (such as \( \gamma_+ \gamma^j + \gamma_- \gamma^j = \gamma^j \)) that are not satisfied by every representation of~\eqref{eq:LevyLeblondAR}.
In particular, when \( d=1 \), relations~\eqref{eq:gammaMatricesFreeEquation} are satisfied.
Despite being less general, the realisation in~\eqref{eq:LevyLeblondGammaFromDirac} has the advantage that the gamma matrices can be defined in terms of a bilinear form with signature \( (1,d) \), representing \( 1 \) time dimension and \( d \) spacial dimensions.

%\bibliographystyle{elsarticle-num}
%\bibliography{Bibliography.bib}
%Important: make sure to double check that Levy-Leblond-symmetry-bbl.bbl the latest references!

\end{document}